\newtheorem{theorem}{Theorem}
\newtheorem{proposition}{Proposition}
\newtheorem{Definition}{Definition}
\newenvironment{proof}[1][Proof]{\begin{trivlist}
\item[\hskip \labelsep {\bfseries #1}]}{\end{trivlist}}
\newcommand{\qed}{\nobreak \ifvmode \relax \else
      \ifdim\lastskip<1.5em \hskip-\lastskip
      \hskip1.5em plus0em minus0.5em \fi \nobreak
      \vrule height0.75em width0.5em depth0.25em\fi}
\begin{document}
%
% paper title
% can use linebreaks \\ within to get better formatting as desired
\title{Algorithms for Efficient Mining of Statistically Significant Attribute Association Information}

\author{Pritam~Chanda,
        Aidong~Zhang,
        and~Murali~Ramanathan,% <-this % stops a space
        }
\IEEEcompsocitemizethanks{\IEEEcompsocthanksitem P. Chanda and A. Zhang are in the Dept. of Computer Science and Engineering, State University of New York, Buffalo, NY\protect\\
% note need leading \protect in front of \\ to get a newline within \thanks as
% \\ is fragile and will error, could use \hfil\break instead.
E-mails: pchanda@buffalo.edu, azhang@buffalo.edu
\IEEEcompsocthanksitem M. Ramanathan is with Dept. of Pharmaceutical Sciences, State University of New York, Buffalo, NY}% <-this % stops a space
\thanks{}

% for Computer Society papers, we must declare the abstract and index terms
% PRIOR to the title within the \IEEEcompsoctitleabstractindextext IEEEtran
% command as these need to go into the title area created by \maketitle.
\IEEEcompsoctitleabstractindextext{%
\begin{abstract}
Knowledge of the association information between the attributes in a data set provides insight into the underlying structure of the data and explains the relationships (independence, synergy, redundancy) between the attributes and class (if present). Complex models learnt computationally from the data are more interpretable to a human analyst when such interdependencies are known. In this paper, we focus on mining two types of association information among the attributes - correlation information and interaction information for both supervised (class attribute present) and unsupervised analysis (class attribute absent). Identifying the statistically significant attribute associations is a computationally challenging task –- the number of possible associations increases exponentially and many associations contain redundant information when a number of correlated attributes are present.  In this paper, we explore efficient data mining methods to discover non-redundant attribute sets that contain significant association information indicating the presence of informative patterns in the data.  
\end{abstract}

% Note that keywords are not normally used for peer review papers.
\begin{keywords}
Information theory, Entropy, Attribute Association, Correlation, Interaction.
\end{keywords}}

% make the title area
\maketitle

\IEEEdisplaynotcompsoctitleabstractindextext
\IEEEpeerreviewmaketitle

\section{Introduction}
Many applications in various fields of scientific research, economics, financial and marketing applications produce multi-dimensional data sets in which complicated interdependencies exist between the attributes of data, such as independence, correlation, synergy, and redundancy. Data mining and statistical techniques have been employed to make sense of these data sets, to discover useful patterns and models in the data that aid explaining how the system being represented works. To discover key patterns in the data, it is necessary to find relationships or associations between the attributes in the data that help to explain the interdependencies among the attributes. Exploring attribute association patterns enable deeper insight into the data, are useful for understanding probabilistic models representing the data and possibly allow one to gain practical knowledge from the model(s)computationally learnt using the data.

From an information theoretic perspective, association information between attributes can be broadly categorized into (1) correlation information and (2) interaction information. The correlation information of an attribute set represents the total amount of information shared among the attributes; equivalently, it can be viewed as a general measure of dependency. The interaction information of an attribute set captures the multivariate dependencies between the attributes which is not present in any subset of the given set. These two are related and complements each other in discovering useful patterns and relationships in the data.

\vspace{-2mm}

\section{Background and Significance}
In this paper, we study the problem of mining the above two types of association information that are statistically significant in discrete data for both supervised (i.e. when a class label attribute is present) and unsupervised analysis (no class label is present). Note that the two analysis methods are different because in the first case we need to find sets of attributes that have significant association information with one another, while in the second case we need to find attributes that have significant association information for the class attribute. Finding these types of associations have important implications in many fields of study. For example, in a biological or genetic context, the risk of developing many common and complex diseases such as different forms of diabetes, mental illness, cancer, autoimmune and cardiovascular diseases involves complex interactions between multiple genes and several endogenous and exogenous environmental factors. For many common diseases, individually each gene (or single nucleotide variations on that gene) have weak statistical associations with the disease, however, together they act in concerted fashion (often with several non-genetic factors e.g. gender, age, smoking habits, drinking habits) to control the expression of the disease \cite{Chanda2007,Chanda2008}. The successful detection of such genetic associations can provide the scientific basis for many underlying biological interactions, improves the prospects for uncovering potentially undiscovered genes involved in the disease process and helps to develop preventative and curative measures for particular genetic and non-genetic susceptibilities. Besides genetics, the usefulness of exploring association information is also important in supervised learning problems such as feature selection where the task is to find a subset of the features that improve the accuracy of a classifier. A statistical association between two attributes exists when the joint effect of both in a model is different from that obtained by additively combining the individual effects. Associations among the attributes are specially important for understanding an appropriate probabilistic model representing the data and subsequent feature selection. Discovering associations between the attributes in a data set provides insight into the underlying structure of the data and explains the relationships (independence, synergy, redundancy) between the attributes. Complex models learnt computationally from the data are more interpretable to a human analyst when such interdependencies are known.

\vspace{-2mm}

\section{Related Work}
Mining correlation information in high-dimensional discrete data has attracted much research interest in recent years. Various approaches have been developed, including correlation pattern mining \cite{Lee,Ke,Omiecinski}, feature selection \cite{Fleuret2004,Tesmer2004,Peng2005}, finding correlated item pairs \cite{Xiong}, and others. Mining correlation information is also closely associated with mining frequent patterns in the data. It roots from the association rule mining problem introduced in the Apriori algorithm \cite{Apriori}. Since then much work has been done on frequent pattern mining with itemsets, constrained rule mining, measuring interestingness of association rules mined and so on. Traditionally support and confidence and related measures have been used to assess the usefulness of the rules mined. Correlation pattern mining was achieved with a statistical basis in \cite{Brin} where the authors have used $\chi^{2}$ correlation measure between pairs of attributes. Information theory based metrics like entropy has also been used as a quality measure for sets of attributes (or items) and efficient algorithms have been proposed to mine the maximally informative $k$-itemsets as in \cite{Knobbe}. Algorithms have been proposed to find low-entropy sets as in \cite{Heikinheimo} where they introduced two kinds of low entropy trees and discussed their properties. In the NIFS method \cite{NIFS}, the authors explore the problem of finding non-redundant high order correlations in binary data and propose pruning strategies by investigating the bounds of multi-information which is a  generalization of pair-wise mutual information. Their proposed pruning methods are based on hard thresholds which is difficult to set unless pre-determined using trial and error. Here we derive bounds on correlation information for both supervised and unsupervised analysis, use pruning strategies using bounds on correlation information, however, instead of hard thresholds, we employ the distributional properties of correlation information which improves the power of our method in the presence of noise in the data. Also our bounds are based on entropy inequalities and therefore not restricted to binary data. Using experimental data sets, we further show that our methods can identify attribute sets (we call them \emph{special combinations of interest}) which are not detected in \cite{NIFS} and also mine interaction information among the attribute sets and use a novel fast permutation strategy to evaluate the statistical significance of interaction information of attribute sets.

Compared with correlation information, interaction information is a more parsimonious measure of association. Interaction information between variables and attributes was researched upon in diverse areas like physics, information theory, neuroscience, game theory, law and economics. The concept was first introduced by McGill \cite{McGill1955} as a multivariate generalization of Shannon's mutual information \cite{Shanon1948}. Later, Han \cite{Han1980} gave rigorous formal definitions of the concepts of interaction while properties of positive and negative interactions appeared in \cite{Tsujishita1995}. In physics, Cerf \cite{Cerf1997} analyzed interaction information of three variables in quantum physics, while Matsuda \cite{Matsuda2000} studied properties of interaction information (referred to as higher order mutual information functions) for general complex systems. Bell \cite{Bell2003} defined co-information forming a partially ordered lattice in terms of the entropies and used it for dependent component analysis. More recently, Jakulin \cite{Jakulin2003, Jakulin2004b} studied it extensively from a machine learning perspective and provided methods for visualizing interactions and interpreting the structure in the data.

Correlation measures such as Pearson's correlation, Spearman's rank correlation, Kendall tau correlation and chi square measures are common examples of first order association measures used to evaluate individual attribute dependencies (synergy with class label) or relevance of an attribute in predicting the class label. Associations among attributes have been used for feature selection directly or indirectly in various data mining and machine learning applications, however, most of these consider only first order associations (mutual information) \cite{Nazareth2005, Fleuret2004,Tesmer2004}. Mutual information was also used as a similarity measure for clustering instances \cite{Krasov2005,Zhou2004}. Also, in mining attribute associations it is important to consider the presence of correlated attributes as this results in several associations that contain redundant information regarding the class label. Feature selection methods that explore means to reduce redundancy among the attributes are  studied by some researchers \cite{Hall1998,Peng2005,Pudil2003}. For example in \cite{Peng2005}, the authors devise a minimal-redundancy-maximal-relevance (mRMR) criterion using information theoretic methods to reduce redundancy and select promising features. In CfsSubsetEvaluation \cite{Hall1998} subsets of features that are highly correlated with the class while having low inter-correlation are preferred. Although methods as in \cite{Alves2005} (GRAD) and \cite{Kwak2002} directly or indirectly considers higher order associations, they do not address the problem posed by the presence of large number of correlated variables in the data. Mining highly-correlated association patterns are also explored in \cite{Vipin2002, Vipin2006}. An important difference of our work from others is that we mine higher order association information, consider redundancy among the attributes instead of simple pairwise correlations between attributes as in \cite{Hall1998, Peng2005} and use statistical significance based pruning strategies (unlike  \cite{Vipin2002, Vipin2006,Peng2005}) to improve the efficiency of our search methods for both supervised and unsupervised analyses.

\section{Contributions of the paper}
Mining the significant attribute associations in a high dimensional data set is a computationally challenging task - the number of possible associations increases exponentially because all possible subsets of the attributes need to be considered and most of these associations contain redundant information when a number of correlated attributes are present. Although, in practice, the attribute associations of interest that are meaningful are much fewer in number compared to all possible associations, the high dimensionality of the data sets makes the number of relevant attribute associations very large. Exploring all subsets of attributes for significant association information becomes computationally intractable as the number of attributes increases. In this paper, we study the problem of mining statistically significant correlation information and interaction information in discrete data in both unsupervised and supervised contexts. Our work is based on the concepts developed in \cite{Chanda2010} where we have developed the algorithms for unsupervised analysis only. In this paper, we do the following:

\begin{enumerate}
\item We present the information theoretic metrics representing interaction information (termed K-way interaction information or KWII), correlation information for unsupervised analysis (termed total correlation information or TCI) and correlation information for supervised analysis (termed class associated correlation information or CACI).
\item We demonstrate and prove the relationships between the above association information metrics.
\item We derive the distributional properties of TCI and CACI for evaluating statistical significance or correlation information.
\item We develop a method for fast evaluation of statistical significance of the interaction information (i.e. KWII).
\item For both supervised and unsupervised cases, we propose the concepts of attribute combinations containing highly significant, moderately significant and non-significant correlation information. These are used to formulate \emph{combinations of interest} as highly significant attribute sets that have all subsets with non-significant correlation information, and \emph{special combinations of interest} that can have at most one subset with highly significant correlation information.
\item We present bounds on correlation information (both TCI and CACI) and develop several pruning strategies utilizing these bounds to efficiently prune the search space.
\item Using the bounds and pruning strategies, for unsupervised cases, we develop the algorithms correlation information miner (CIM) and interaction information miner (IIM). We also develop the correlation information miner class associated (CIM\textsubscript{CA} for supervised cases).
\item Using several experimental and a real-life data set, we critically examine the effectiveness and efficiency of our proposed mining algorithms.
\end{enumerate}

\vspace{-3mm}

\section{Association Information Metrics}
In this section, we introduce some basic notations that we shall use throughout the paper. In the rest of the paper, the term combination is also used to refer to a set of attributes. A given data set $D$ is represented as a $m \times n$ matrix of discrete values where each row is a sample and each column is an attribute. Let $\zeta = \{A_1;A_2;...;A_n\}$ be the set of attributes in $D$. We treat $A_i$ as a discrete random variable and $p(a_i)$ represents the probability density function of $A_i$. Also, the words 'combination' and 'set' are used interchangeably in the paper referring to a collection of attributes.
%%%%%%%%%%%%%%%%%%%%%%%%%%%%%%%%%  Entropy Definition %%%%%%%%%%%%%%%%%%%%%%%%%%%%%%%%%%%%%%%%%%%%%%%%%%%%%
\begin{Definition}
  The uncertainty of a discrete random variable $A_i$ is defined by Shannon's entropy \cite{Shanon1948} as,
  \begin{eqnarray}
  H(A_i) = - \sum_{a \in V_i} p(a_i)log(p(a_i)) \nonumber
  \end{eqnarray}
  \label{def1}
\end{Definition}
\vspace{-3mm}
%%%%%%%%%%%%%%%%%%%%%%%%%%%%%%%%%  KWII Definition %%%%%%%%%%%%%%%%%%%%%%%%%%%%%%%%%%%%%%%%%%%%%%%%%%%%%
\begin{Definition}
  The interaction information among the $k$ attributes ($k$-way interaction information or $KWII$) in set $S = \{A_1;A_2;...;A_k\}$, $S \subseteq \zeta$, is
  the multivariate generalizations of Shannon's mutual information. It is defined as the amount of information (synergy or redundancy) that is present in the set of attributes, which is not present in any subset of these attributes \cite{Jakulin2003}. The $KWII$ can be written succinctly as an alternating sum of the entropies of all possible subsets $\tau$ of $S$ using the difference operator notation of Han \cite{Han1980}:
  \begin{eqnarray}
  KWII(S) = -\sum_{\tau \subseteq S} {(-1)}^{|S \backslash \tau|} H(\tau) \nonumber
  \end{eqnarray}
  \label{def2}
\end{Definition}
\vspace{-3mm}
The number of attributes $k$ in a combination is called the order of the combination. KWII quantifies interactions by representing the information that cannot be obtained without observing all $k$ attributes at the same time.

In the bivariate case, the KWII is always nonnegative but in the multivariate case, KWII can be positive or negative(positive values indicate synergy between the attributes, negative values indicate redundancy between attributes, and a value of zero indicates the absence of k-way interactions).
%%%%%%%%%%%%%%%%%%%%%%%%%%%%%%%%%  TCI Definition %%%%%%%%%%%%%%%%%%%%%%%%%%%%%%%%%%%%%%%%%%%%%%%%%%%%%

\begin{Definition}
  The Total Correlation Information (TCI) involving attributes in set $S = \{A_1;...;A_k\}$ is defined \cite{Han1980}\cite{Jakulin2004a} as,
  \begin{eqnarray}
  TCI(S) &=& \sum_{i=1}^{k} H(A_i)- H(A_1;...;A_k) \nonumber \\
         &=& \hspace{-3mm} \sum_{a_1,..,a_k}\hspace{-2mm}p(a_1...a_k)log_{2}(\frac{p(a_1...a_k)}{p(a_1)...p(a_k)}) \nonumber
  \end{eqnarray}
  \label{def3}
\end{Definition}
\vspace{-2mm}
The TCI is the total amount of information shared among the attributes in the set. A TCI value that is zero indicates that the attributes are independent and the maximal value of TCI occurs when one attributes is completely redundant with the others. An important property of the TCI is that it is always non-negative and increases monotonically with increasing combination size i.e., $TCI(A_1;\cdots;A_k) \leq TCI(A_1;\cdots;A_k;A_{k+1})$.
%%%%%%%%%%%%%%%%%%%%%%%%%%%%%%%%%  TCI Definition %%%%%%%%%%%%%%%%%%%%%%%%%%%%%%%%%%%%%%%%%%%%%%%%%%%%%
Next we examine the correlation metrics in a supervised analysis where a class label attribute is present that specifies the labels of each instance in the data. First note that the TCI can be used to calculate the correlation information by treating the class attribute just as one of the attributes in a combination. However, the correlation information represented by the TCI is not free from unnecessary confounding information that does not involve the class attribute. For example, say we are given data with three predictor attributes $A_1$, $A_2$, and $A_3$ and a class attribute $C$. The value of $TCI(A_1;A_2;A_3;C)$ will represent the overall correlation information among these attributes which contains several components viz. $KWII(A_1;A_2)$, $KWII(A_1;A_3)$, $KWII(A_2;A_3)$ and $KWII(A_1;A_2;A_3)$ which do not contain $C$ and any information related to $C$. We therefore present another metric called the Class Associated Correlation Information (or CACI) which is a non-overlapping sum of interaction information about the class attribute for the predictor attributes $A_1$,...$A_k$ and the class $C$. The CACI is obtained from the measure representing the overall dependency among the predictor attributes and the class attribute by removing the  contributions representing the interdependencies (e.g., correlations) among the predictor attributes not related to the class attribute. Accordingly, the CACI is defined by:

\begin{Definition}
  The Class Associated Correlation Information (CACI) involving attributes in set $S = \{A_1;...;A_k\}$ and class $C$ is defined as,
    \begin{eqnarray}
    &&CACI(S;C) = TCI(S;C) - TCI(S)\nonumber\\
    &=& \sum_{a_1}...\sum_{a_k} \sum_{c} p(a_1,...,a_k,c) log (\frac{p(a_1,...,a_k,c)}{p(c)p(a_1...a_k)})
    \end{eqnarray}
  \label{def4}
\end{Definition}
\vspace{-3mm}
In the above definition, the $TCI(A_1;A_2;...;A_k;C)$ term represents the overall dependency among the all the attributes and the class whereas the $TCI(A_1;A_2;...;A_k)$ term represents the inter-dependencies only among the predictor attributes in the absence of the class attribute.
%%%%%%%%%%%%%%%%%%%%%%%%%%%%%%%%%  TCI Definition %%%%%%%%%%%%%%%%%%%%%%%%%%%%%%%%%%%%%%%%%%%%%%%%%%%%%
\vspace{-3mm}

\subsection{Properties of TCI}
\begin{proposition}
The TCI increases monotonically with increased combination size.
\label{Prop1}
\end{proposition}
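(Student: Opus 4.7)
The plan is to show that the incremental change $TCI(A_1;\ldots;A_k;A_{k+1}) - TCI(A_1;\ldots;A_k)$ is non-negative by rewriting it as a mutual information, which is a non-negative quantity.

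First, I would apply Definition 3 to both terms directly. Subtracting, the marginal entropies $\sum_{i=1}^{k} H(A_i)$ cancel, leaving
\begin{eqnarray*}
&& TCI(A_1;\ldots;A_k;A_{k+1}) - TCI(A_1;\ldots;A_k) \\
&=& H(A_{k+1}) - \bigl[ H(A_1;\ldots;A_k;A_{k+1}) - H(A_1;\ldots;A_k) \bigr].
\end{eqnarray*}

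Next I would apply the chain rule for Shannon entropy to the bracketed expression, $H(A_1;\ldots;A_k;A_{k+1}) = H(A_1;\ldots;A_k) + H(A_{k+1} \mid A_1;\ldots;A_k)$, so that the bracket collapses to the conditional entropy $H(A_{k+1} \mid A_1;\ldots;A_k)$. The difference then reads $H(A_{k+1}) - H(A_{k+1} \mid A_1;\ldots;A_k)$, which is exactly the mutual information $I(A_{k+1}; A_1,\ldots,A_k)$ between $A_{k+1}$ and the joint random variable $(A_1,\ldots,A_k)$.

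Finally, I would invoke the standard fact that mutual information is non-negative (equivalently, conditioning cannot increase entropy, which follows from Jensen's inequality applied to the log function), concluding that the incremental term is $\geq 0$ and hence $TCI(A_1;\ldots;A_k) \leq TCI(A_1;\ldots;A_k;A_{k+1})$. There is no real obstacle here beyond bookkeeping; the proposition is essentially a restatement of the fact that observing additional variables can only increase total shared information, and the only subtlety is choosing the chain-rule grouping that exposes the mutual information cleanly.
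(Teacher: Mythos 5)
Your proposal is correct and follows essentially the same route as the paper's proof: expand the difference of the two TCI values via Definition 3, cancel the marginal entropies, use the entropy chain rule to reduce the bracket to a conditional entropy, and conclude from $H(A_{k+1}) - H(A_{k+1}\mid A_1,\ldots,A_k) \geq 0$ (i.e., conditioning cannot increase entropy). The only difference is cosmetic indexing and your explicit naming of the difference as a mutual information.
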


\begin{proof}
  For $k$ attributes $A_1,A_2,...,A_k$, we have:
  \begin{eqnarray}
   && TCI(A_1;...A_k)-TCI(A_1;...A_{k-1}) = \sum_{i=1}^{k} H(A_i) \nonumber \\
   &-& H(A_1 \cdots A_k) - \sum_{i=1}^{k-1} H(A_i) + H(A_1 \cdots A_{k-1}) \nonumber \\
   &=& H(A_k) - H(A_k|A_1...A_{k-1}) \geq 0
   \label{eq0}
  \end{eqnarray}
  The last inequality follows from the fact that the entropy of $A_k$ decreases when information from $A_1,\cdots A_{k-1}$ is known (the vertical bar represents conditional entropy).
\end{proof}

Here, we state the theorems demonstrating the relationships between the above mentioned two information theoretic metrics \cite{Chanda2010}.
%%%%%%%%%%%%%%%%%%%%%%%%%%%%%%%%%  Theorem 2 %%%%%%%%%%%%%%%%%%%%%%%%%%%%%%%%%%%%%%%%%%%%%%%%%%%%%
\begin{theorem} The TCI of an attribute set $S$ represents the sum of all KWII between two or more attributes from $S$, i.e.,
  $TCI(S) = \sum_{Z \subseteq S, |Z| \geq 2} KWII(Z)$
  \label{Th2}
\end{theorem}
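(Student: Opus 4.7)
The plan is to substitute the definition of KWII into the right-hand side and then exchange the order of summation so that we collect the coefficient of each entropy term $H(\tau)$. Concretely, I would write
\begin{eqnarray}
\sum_{\substack{Z \subseteq S \\ |Z| \geq 2}} KWII(Z) &=& -\sum_{\substack{Z \subseteq S \\ |Z| \geq 2}} \sum_{\tau \subseteq Z} (-1)^{|Z \setminus \tau|} H(\tau) \nonumber \\
&=& -\sum_{\tau \subseteq S} H(\tau) \sum_{\substack{Z : \tau \subseteq Z \subseteq S \\ |Z| \geq 2}} (-1)^{|Z|-|\tau|} \nonumber
\end{eqnarray}
and then show that the inner coefficient equals $+1$ when $|\tau|=1$, equals $-1$ when $\tau = S$, and vanishes otherwise. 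Matching this against $TCI(S) = \sum_{i=1}^{k} H(A_i) - H(A_1,\ldots,A_k)$ (Definition 3) then yields the claim. The empty set contributes $H(\emptyset)=0$ and can be ignored.

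To evaluate the inner coefficient, set $t=|\tau|$ and $k=|S|$ and reindex by $m = |Z|-t$. For $t \geq 1$ the coefficient becomes
\begin{eqnarray}
-\sum_{m=\max(0,\,2-t)}^{k-t} \binom{k-t}{m} (-1)^{m}. \nonumber
\end{eqnarray}
The key combinatorial identity I would invoke is the binomial identity $\sum_{m=0}^{n}\binom{n}{m}(-1)^m = 0$ for $n \geq 1$. Applying this gives three cases: if $2 \leq t < k$ the sum is a complete alternating binomial sum and vanishes; if $t = k$ only the $m=0$ term survives, contributing $-1$; if $t=1$ the range starts at $m=1$, so the coefficient equals $-(0 - \binom{k-1}{0}) = +1$.

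These coefficients are exactly the coefficients of $H(A_i)$ and $H(A_1,\ldots,A_k)$ in the definition of $TCI(S)$, and all intermediate entropy terms cancel, completing the proof.

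The main technical obstacle is the interchange of summations and the careful bookkeeping around the boundary cases $|Z|=0,1$, which are excluded by the constraint $|Z| \geq 2$ and therefore perturb the otherwise-clean alternating binomial identity; handling these boundary terms correctly is what produces the coefficients $+1$ on singletons and $-1$ on $S$ rather than $0$ everywhere.
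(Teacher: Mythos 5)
Your proof is correct: substituting the Han alternating-sum form of $KWII$, interchanging the order of summation, and evaluating the coefficient of each $H(\tau)$ via the identity $\sum_{m=0}^{n}\binom{n}{m}(-1)^{m}=0$ for $n\geq 1$ (with the boundary cases $|\tau|=1$ and $\tau=S$ handled exactly as you describe) does recover $\sum_{i=1}^{k}H(A_i)-H(S)=TCI(S)$. The paper itself states this theorem without proof, deferring to the cited reference, and your coefficient-extraction argument is the standard self-contained derivation one would expect there.
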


%%%%%%%%%%%%%%%%%%%%%%%%%%%%%%%%%  CACI Definition %%%%%%%%%%%%%%%%%%%%%%%%%%%%%%%%%%%%%%%%%%%%%%%%%%%%%

\vspace{-3mm}

\subsection{Properties of CACI}
\begin{theorem} The CACI of an attribute set $S$ and $C$ represents the sum of all KWII between one or more attributes from $S$ and $C$, i.e.,
  $CACI(S;C) = \sum_{Z \subseteq S, |Z| \geq 1} KWII(Z;C)$
  \label{Th3}
\end{theorem}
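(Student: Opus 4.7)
The plan is to derive this identity directly from the definition of $CACI$ together with Theorem~\ref{Th2}, avoiding any fresh manipulation of entropies. By Definition~\ref{def4}, $CACI(S;C) = TCI(S;C) - TCI(S)$, so the strategy is simply to expand both $TCI$ terms using Theorem~\ref{Th2} and observe that exactly the right summands survive the cancellation.

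First I would apply Theorem~\ref{Th2} to the set $S \cup \{C\}$, writing
\begin{eqnarray}
TCI(S;C) = \sum_{Z \subseteq S \cup \{C\},\, |Z| \geq 2} KWII(Z). \nonumber
\end{eqnarray}
I would then split this sum according to whether $C \in Z$ or $C \notin Z$. The subsets $Z \subseteq S \cup \{C\}$ of size at least $2$ not containing $C$ are exactly the subsets $Z \subseteq S$ with $|Z| \geq 2$, so their contribution equals $TCI(S)$ by a second application of Theorem~\ref{Th2}. The remaining subsets are those of the form $Z = Z' \cup \{C\}$ where $Z' \subseteq S$ and $|Z'| \geq 1$ (the condition $|Z'|\geq 1$ is forced by $|Z|\geq 2$).

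Subtracting $TCI(S)$ then cancels the first group and leaves
\begin{eqnarray}
CACI(S;C) = \sum_{Z' \subseteq S,\, |Z'| \geq 1} KWII(Z';C), \nonumber
\end{eqnarray}
which is the desired identity. There is no real obstacle here; the only mild care required is the bookkeeping on the subset-size conditions to confirm that no $KWII$ term is double-counted or omitted when we rewrite the subsets of $S \cup \{C\}$ as a disjoint union of those avoiding $C$ and those of the form $Z' \cup \{C\}$ with $Z' \subseteq S$, $|Z'|\geq 1$.
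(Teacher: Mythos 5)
Your proposal is correct and follows essentially the same route as the paper: both expand $TCI(S;C)$ and $TCI(S)$ via Theorem~\ref{Th2} and cancel the $KWII$ terms over subsets of $S$ not containing $C$, leaving exactly the terms $KWII(Z;C)$ with $Z \subseteq S$, $|Z|\geq 1$. The paper additionally records an intermediate entropy identity ($TCI(S;C)=TCI(S)+TCI(A_1\cdots A_k;C)$), but the core cancellation argument is the one you give.
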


\begin{proof}
For the set $S = \{A_1;...;A_k\}$ and class $C$, from definition \ref{def3} we have,
\begin{eqnarray}
&&TCI(S;C)=\sum_{i=1}^{k} H(A_i) + H(C) - H(S;C) \nonumber \\
&=&\sum_{i=1}^{k} H(A_i) - H(S) + H(C) + H(S) - H(S;C) \nonumber \\
&=&TCI(S)+TCI(A_1...A_k;C)
\label{eq3}
\end{eqnarray}
Thus using theorem \ref{Th2},
\begin{eqnarray}
&&TCI(A_1A_2...A_k;C)=TCI(S;C)-TCI(S) \nonumber \\
&=&\sum_{\nu \in \{S;C\},|\nu|\geq 2} KWII(\nu) - \sum_{\omega \in \{S\},|\omega|\geq2} KWII(\omega) \nonumber \\
&=&\sum_{\xi \in \{S\},|\xi|\geq 1} KWII(\xi;C)
\label{eq4}
\end{eqnarray}

The term $TCI(A_1A_2...A_k;C)$ is the TCI between the joint distribution of the $k$ attributes and the class attribute; the $TCI(S)$ = $TCI(A_1;...;A_K)$ term is the TCI among the $k$ attributes and $TCI(S;C)$ = $TCI(A_1;A_2;...;A_K;C)$ is the TCI among the $k$ attributes and the class. The above equation is the sum of all possible interactions involving attributes $A_1,A_2,...,A_k,C$ that contains the class attribute $C$. This is defined as the Class Associated Correlation Information or CACI. Thus,
\begin{eqnarray}
CACI(S;C)&=&TCI(S;C)-TCI(S) \nonumber \\
&=&\sum_{\xi \in \{S\},|\xi|\geq1} KWII(\xi;C)
\label{eq5}
\end{eqnarray}
Because information content of each KWII is non-redundant (or non-overlapping) with every other combination and the CACI can be expressed as a sum of KWII values, the CACI is a non-overlapping sum of information about the class attribute.
\end{proof}

\begin{proposition}
CACI is always greater than or equal to zero and increases monotonically with increased combination size (i.e. $CACI(A_1;...A_k;C) \geq CACI(A_1;...A_{k-1};C)$).
\label{Prop2}
\end{proposition}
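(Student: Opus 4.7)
The plan is to reduce both claims to standard properties of mutual information by exploiting the identity derived in the proof of Theorem \ref{Th3}, namely that
\begin{eqnarray}
CACI(S;C) = TCI(S;C) - TCI(S) = TCI(A_1\cdots A_k; C), \nonumber
\end{eqnarray}
where $TCI(A_1\cdots A_k; C)$ is the TCI between the joint variable $(A_1,\ldots,A_k)$ (viewed as one compound discrete random variable) and the class attribute $C$. Since for two variables $TCI$ coincides with Shannon's mutual information, I would first rewrite $CACI(S;C) = I((A_1,\ldots,A_k); C)$.

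Non-negativity then follows immediately from the non-negativity of mutual information (equivalently, from the non-negativity of the KL divergence between the joint $p(a_1,\ldots,a_k,c)$ and the product $p(a_1,\ldots,a_k)\,p(c)$, which is exactly the sum appearing in Definition \ref{def4}). So the first half of the proposition needs only one line of justification.

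For monotonicity, my plan is to invoke the chain rule for mutual information:
\begin{eqnarray}
I((A_1,\ldots,A_k); C) &=& I((A_1,\ldots,A_{k-1}); C) \nonumber \\
&& +\, I(A_k; C \mid A_1,\ldots,A_{k-1}). \nonumber
\end{eqnarray}
Translating back via the identity above, this is exactly
\begin{eqnarray}
CACI(A_1;\ldots;A_k;C) - CACI(A_1;\ldots;A_{k-1};C) \nonumber \\
= I(A_k; C \mid A_1,\ldots,A_{k-1}), \nonumber
\end{eqnarray}
and conditional mutual information is non-negative (a standard consequence of the non-negativity of conditional KL divergence, or equivalently of $H(C\mid A_1,\ldots,A_{k-1}) \geq H(C\mid A_1,\ldots,A_k)$, since conditioning can only reduce entropy). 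This yields the monotonicity bound.

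I do not expect any real obstacle here: once the CACI is recognized as a mutual information between a joint variable and $C$, both claims are direct consequences of textbook information-theoretic inequalities. The only thing to be a little careful about is making clear that $TCI$ evaluated on exactly two arguments equals the bivariate mutual information, so that the chain-rule argument applies cleanly; I would note this explicitly before invoking the chain rule, to keep the proof self-contained relative to the earlier definitions.
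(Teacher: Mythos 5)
Your proof is correct, and it is essentially the argument the paper intends: the paper omits an explicit proof of this proposition, but its proof of Theorem~\ref{Th8} already uses the same identity $CACI(S;C)=H(C)-H(C|A_1\cdots A_k)$ (i.e.\ $CACI$ is the mutual information between the compound variable $(A_1,\ldots,A_k)$ and $C$), from which non-negativity is immediate and monotonicity follows exactly as in the paper's proof of Proposition~\ref{Prop1}, since the difference equals $H(C|A_1\cdots A_{k-1})-H(C|A_1\cdots A_k)\ge 0$.
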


\vspace{-3mm}

\section{Problem Formulation}
In this section, we shall develop a problem formulation common to both supervised analysis (i.e. class attribute present) and unsupervised analysis (i.e. class attribute absent) and will use either CACI or TCI. For the ease of presentation, lets denote either CACI or TCI by the term CI (standing for correlation information). Whereever applicable, we shall distinguish between the two by using the actual names (CACI or TCI). First we introduce the concepts of Combinations of Interest (or COI) and Special Combinations of Interest (or SCOI). A COI is an attribute set containing high CI such that its proper subsets have low CI, while a SCOI is similar to the COI but can have exactly one proper subset to have high CI. Our definitions of high and low are based on statistical significance levels which is based on distributional properties explored in section 7. Broadly, our goal is to mine the COI, SCOI and combinations with high KWII that represent attribute sets containing non-redundant association information either with class (for supervised studies) or without class. To develop our mining strategy, we first give some formal definitions.

\vspace{-3mm}

\subsection{Definitions for the Unsupervised Case}
First we present the definitions assuming no class attribute is present. Our definitions use the common statistical concept of $Pvalue$. Given an observed value of a test statistic, $Pvalue$ is defined as probability of obtaining a value more extreme than the given one, under the null distribution of the test statistic. Assume that we know the probability distribution function of the TCI. Let $\alpha_{High}$ and $\alpha_{Low}$ be two given significance levels for determining the statistical significance of an observed value of TCI such that $0 < \alpha_{High} < \alpha_{Low}$. Let $S = \{A_1;\cdots;A_k\} \subseteq \zeta$ be a given set of attributes.

\begin{Definition}
$S$ has statistically \textbf{Highly Significant} correlation information if $Pvalue(TCI(S)) < \alpha_{High}$. We refer to such
a combination of attributes as \textbf{Highly Significant Combination} or \textbf{HSC}.
\label{def5}
\end{Definition}

\begin{Definition}
$S$ has statistically \textbf{Non-Significant} correlation information if $Pvalue(TCI(S)) \geq \alpha_{Low}$. We refer to such
a combination of attributes as \textbf{Non-Significant Combination} or \textbf{NSC}.
\label{def6}
\end{Definition}

\begin{Definition}
$S$ has statistically \textbf{Moderately-Significant} correlation information if $\alpha_{High}$ $\leq Pvalue$ $(TCI(S))$ $<\alpha_{Low}$. Such a combination of attributes is called a \textbf{Moderately-Significant Combination} or \textbf{MSC}.
\label{def7}
\end{Definition}
For example, setting $\alpha_{High} = 10^{-10}$ and $\alpha_{Low} = 10^{-3}$, a $Pavlue$ of $10^{-12}$ will be Highly Significant while that of 0.01 will be Non-Significant.

\begin{Definition}
$S$ is a \textbf{C}ombination \textbf{O}f \textbf{I}nterest (or \textbf{COI}) if it satisfies:-
\begin{enumerate}
\item S is a HSC, and
\item Each proper subset of S is a NSC.
\end{enumerate}
\label{def8}
\end{Definition}

However checking all proper $2^{k-1}$ subsets of $S$ is computationally expensive. Let $S_{k-1} \subset S$ with $k-1$ attributes. From the monotonic increasing property of the TCI (property (3) in definition \ref{def3}), $TCI(S) \geq TCI(S_{k-1})$. Therefore, we make the assumption that if $Pvalue(TCI(S))\geq \alpha_{Low}$, then $Pvalue(TCI(S_{k-1}))$ is also $\geq \alpha_{Low}$ as smaller TCI value usually has lower significance. As a result, we only need to check whether the $k-1$ size subsets of $S$ are NSC.

The definition of COI is based on the fact that if $S$ is a HSC and one or more of its subsets are HSC or MSC, then $S$ has redundancy as it has at least one subset with high correlation information. For example, assume set $S=\{A_1;A_2;A_3;A_4\}$ is a HSC and its subsets $S'=\{A_1;A_2\}$ and $S''=\{A_3;A_4\}$ are also HSC. In this case, mining $S'$ and $S''$ are sufficient to capture all the interacting attributes. However, this is a strict condition that need to be relaxed to capture more information as seen in the next definition.

\begin{Definition}
Let $\Gamma_{k}$ denote the set of all subsets of $S$ with $k-1$ attributes. $S$ is a \textbf{S}pecial \textbf{C}ombination \textbf{O}f \textbf{I}nterest (or \textbf{SCOI}) if it satisfies:-
\begin{enumerate}
\item S is a HSC,
\item Exactly one member (say set $X$) $\in \Gamma_{k}$ is a HSC and all others are NSC, and
\item $\Delta_{TCI}$ = TCI($S$)-TCI($X$) is statistically significant at significance level $\alpha_{High}$.
\end{enumerate}
\label{def9}
\end{Definition}

Let $X = S\backslash\{A_k\}$. Then, it can be easily shown that $\Delta_{TCI}$ = $H(A_k)$ + $H(X)$ - $H(S)$ = $TCI(\vec{X};A_k)$, where $\vec{X}$ represents a new attribute formed by the joint of all attributes in $X$. The motivation behind the definition of SCOI is based on the following example. Assume set $S=\{A_1;A_2;A_3;A_4\}$ is a HSC and only its subset $S'=\{A_1;A_2;A_3\}$ is a $HSC$. If $\Delta_{TCI} = TCI(A_1A_2A_3;A_4)$ is significant, $A_4$ is contributing significantly to the increased correlation information. If we only mine $S$ and not $S'$, we lose important association information contributed by $A_4$ \emph{only in combination with $S$}.\vspace{2mm}

\vspace{-3mm}

\subsection{Definitions for the Supervised Case}
Assume that we know the probability distribution function of the CACI. Let $\alpha_{High}$ and $\alpha_{Low}$ be two given significance levels for determining the statistical significance of an observed value of CACI such that $0 < \alpha_{High} < \alpha_{Low}$. Let $S_c = S_c = S \cup \{C\}$ be a given set of attributes including the class attribute.

\begin{Definition}
$S_c$ has statistically \textbf{Highly Significant} class associated correlation information if $Pvalue$ $(CACI(S_c))$ $<\alpha_{High}$. We refer to such
a combination of attributes as \textbf{Highly Significant Combination Class Associated} or \textbf{HSC\textsubscript{CA}}.
\label{def10}
\end{Definition}

\begin{Definition}
$S_c$ has statistically \textbf{Non-Significant} class associated correlation information if $Pvalue$ $(CACI(S_c))$ $\geq \alpha_{Low}$. We refer to such
a combination of attributes as \textbf{Non-Significant Combination Class Associated} or \textbf{NSC\textsubscript{CA}}.
\label{def11}
\end{Definition}

\begin{Definition}
$S_c$ has statistically \textbf{Moderately-Significant} class associated correlation information if $\alpha_{High}$ $\leq Pvalue$ $(CACI(S_c))$ < $\alpha_{Low}$. Such a combination of attributes is called a \textbf{Moderately-Significant Combination Class Associated} or \textbf{MSC\textsubscript{CA}}.
\label{def12}
\end{Definition}

Again following the definitions we presented for the unsupervised, in presence of $C$, we have,
\begin{Definition}
$S$ is a \textbf{C}ombination \textbf{O}f \textbf{I}nterest class associated (or \textbf{COI\textsubscript{CA}}) if it satisfies:-
\begin{enumerate}
\item $S_c$ is a HSC\textsubscript{CA}, and
\item Each proper subset of $S_c$ is a NSC\textsubscript{CA}.
\end{enumerate}
\label{def13}
\end{Definition}

However checking all proper $2^{k-1}$ subsets of $S_c$ is computationally expensive. Following the same argument as in definition of COI, because CACI also has a monotonic increasing property, we only need to check whether the $k-1$ size subsets of $S_c$ are NSC.

Finally we define the case analogous to SCOI,

\begin{Definition}
Let $\Gamma_{k}$ denote the set of all subsets of $S_c$ with $k-1$ attributes such that each subset contains $C$. $S_c$ is a \textbf{S}pecial \textbf{C}ombination \textbf{O}f \textbf{I}nterest class associated (or \textbf{SCOI\textsubscript{CA}}) if it satisfies:-
\begin{enumerate}
\item $S_c$ is a HSC\textsubscript{CA},
\item Exactly one member (say set $X_c$) $\in \Gamma_{k}$ is a HSC\textsubscript{CA} and all others are NSC\textsubscript{CA}, and
\item $\Delta_{CACI}$ = CACI($S_c$)-CACI($X_c$) is statistically significant at significance level $\alpha_{High}$.
\end{enumerate}
\label{def14}
\end{Definition}

The motivation behind the definition of SCOI\textsubscript{CA} is based on the following example. Assume set $S_c=\{A_1;A_2;A_3;A_4;C\}$ is a HSC\textsubscript{CA} and only its subset ${S'}_c =\{A_1;A_2;A_3;C\}$ is a HSC\textsubscript{CA}. If $\Delta_{CAI}$ is significant, $A_4$ is contributing significantly to the increased correlation information with $C$. If we only mine $S_c$ and not ${S'}_c$, we lose important class related association information contributed by $A_4$ \emph{only in combination with $S_c$}.\vspace{2mm}

\vspace{-3mm}

\subsection{Redundancy Considerations}
Next, we consider correlations among data attributes (e.g. linkage disequilibrium in genetic data) which can result in redundancy (i.e. presence of overlapping information) among the attribute combinations. First we present the case for unsupervised analysis. Using the property that KWII is negative in presence of redundancy, we have,

\begin{Definition}
Two attributes $A_i$ and $A_j$ are redundant if $Red(A_i;A_j) = \frac{KWII(A_i;A_j;A_j)}{\min\{H(A_i),H(A_j)\}} \leq -\Delta$, where $0 \leq \Delta \leq 1$ is a user specified threshold.
\label{def15}
\end{Definition}

The definition is based on the fact that if $A_i$ and $A_j$ have high redundancy, they are in fact interacting, i.e, $A_i$ explains $A_j$ very well. Also $A_j$ completely explains itself ($A_j$) causing the expression $KWII(A_i;A_j;A_j)$ to have redundant information. The denominator is used to normalize the KWII and is based on the easy to prove fact that $KWII(A_i;A_j;A_k) \leq min\{H(A_i),H(A_j),H(A_k)\}$.

In presence of a class attribute $C$, we have,
\begin{Definition}
Two attributes $A_i$ and $A_j$ are redundant in the context of $C$ if $Red(A_i;A_j) = \frac{KWII(A_i;A_j;C)}{H(C)} \leq -\Delta_{CA}$, where $0 \leq \Delta_{CA} \leq 1$ is a user specified threshold in the presence of a class variable.
\label{def16}
\end{Definition}
In the above definition, if the variables $A_i$ and $A_j$ are redundant, they have similar information about $C$, as a result, the $KWII(A_i;A_j;C)$ will have redundant information making it negative.

\vspace{-3mm}

\subsection{Mining Strategy}
Compared with the TCI or CACI, the KWII is a more valuable information metric because it is a parsimonious measure of association for the attribute combination of interest alone and does not contain contributions from lower-order combinations \cite{Jakulin2003}. However, KWII alone cannot be used to device an efficient mining algorithm because it takes on both positive and negative values. Only all individual and joint entropies are needed for a TCI or CACI calculation, making it computationally far more tractable than the KWII. Both the TCI and CACI are always non-negative and increases monotonically with increased combination size making it potentially suitable for our mining algorithm. In the unsupervised case, from theorem \ref{Th2}, the TCI represents the cumulative synergy present in all subset combinations of the attribute set  $\{A_1;A2;\cdots;A_k\}$. Our goal is therefore to use the TCI in our mining algorithm to identify the regions in the combinatorial space (the COI and the SCOI) that contain potentially high correlation information (and therefore high interaction information) and then compute the KWII for the reduced combinatorial space. As a result, we shall concomitantly mine attribute sets containing useful correlation information (i.e. TCI) and interaction information (i.e. KWII). Similarly, in presence of a class variable, we shall use the CACI to identify regions in the combinatorial space containing high class associated correlation and interaction information

Given a maximum order of combinations to explore ($K$) and a pair of significance levels ($\alpha_{High}$, $\alpha_{Low}$), our strategy of mining combinations with significant TCI (or CACI) and KWII broadly consists of two steps :-

\begin{enumerate}
\item Mine all combinations that are COI and SCOI (or COI\textsubscript{CA} and SCOI\textsubscript{CA}), and
\item If $\nu$ is the set of attributes present in combinations mined in step 1, compute KWII($\tau$) of all subsets $\tau \subseteq \nu$, s.t. $\tau \leq K$ (or, in presence of class attribute $C$, if $\nu$ is the set of predictor attributes present in combinations mined in step 1, compute KWII($\tau$;C) of all subsets $\tau \subseteq \nu$, s.t. $\tau \leq K$.
\end{enumerate}

In step 1, we explore the search space in a breadth-first manner that results in a set enumeration tree as shown in Figure \ref{Fig1}. When mining for COI and SCOI (or COI\textsubscript{CA} and SCOI\textsubscript{CA}), computing the TCI (or CACI) of every attribute set is time consuming, therefore, in the next section we shall develop upper and lower bounds of TCI (or CACI) based on that of its parent/ancestor/sibling nodes in the search space. We further develop pruning strategies using definitions of COI, SCOI (or COI\textsubscript{CA} and SCOI\textsubscript{CA}) and redundancy (definitions \ref{def5}-\ref{def16}).

\begin{figure}[h]
\centering \epsfig{file=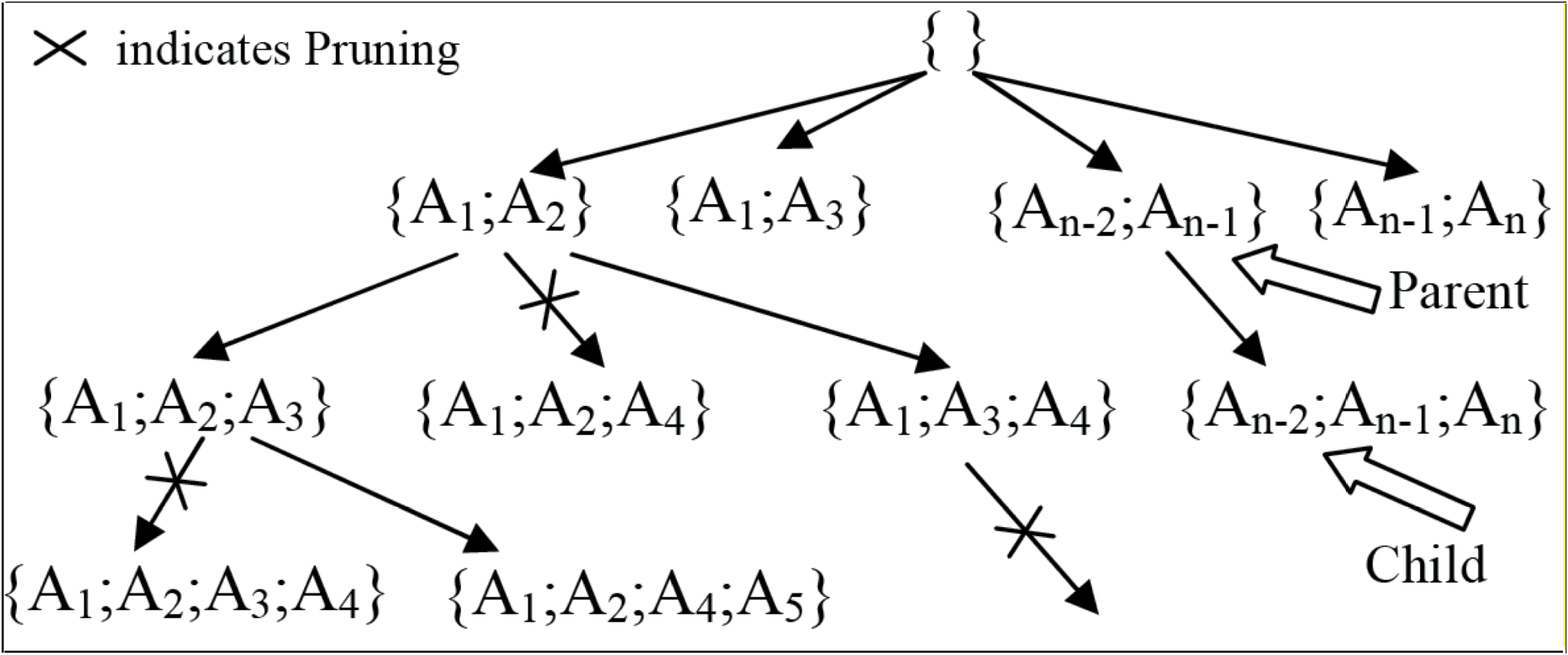,height=1.2in, width=3.0in}
\caption{\label{Fig1}Sample tree enumeration of BFS for Unsupervised Mining.}
\end{figure}

\section{Correlation Information Bounds}
In this section, we present results on upper and lower bounds on TCI and CACI. The $Palue$ computation on these bounds shall be used to speed up our mining strategy.

\vspace{-3mm}

\subsection{Bounds on TCI}
In obtaining the upper and lower bounds, we shall assume TCI computations on the attribute set $S = \{A_1;\cdots;A_k\} \subseteq \zeta$ unless otherwise stated.

%%%%%%%%%%%%%%%%%%%%%%%%%%%%%%%%%  Theorem 3 %%%%%%%%%%%%%%%%%%%%%%%%%%%%%%%%%%%%%%%%%%%%%%%%%%%%%
\begin{theorem}
  \begin{eqnarray}
  TCI(S) &\geq&  \sum_{i=1}^{k} H(A_i) - \frac{1}{2}[H(S \backslash \{A_1\}) \nonumber \\
        &+& H(S \backslash \{A_2\})+H(A_1;A_2)]\nonumber
  \end{eqnarray}
  \label{Th4}
  \vspace{-3mm}
\end{theorem}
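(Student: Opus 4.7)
The plan is to reduce the stated bound on $TCI(S)$ to a pure entropy inequality and then prove that inequality using two elementary facts: subadditivity of conditional entropy and the fact that conditioning reduces entropy.

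First I would substitute the definition $TCI(S) = \sum_{i=1}^{k} H(A_i) - H(S)$ into the claim. The $\sum H(A_i)$ terms cancel, and after multiplying by $2$ the inequality to be proved becomes
\begin{equation*}
2\,H(S) \;\leq\; H(S \setminus \{A_1\}) + H(S \setminus \{A_2\}) + H(A_1;A_2).
\end{equation*}
So the theorem is equivalent to an entropy inequality with $A_1$ and $A_2$ ``double-counted'' and the remaining attributes double-counted through $S \setminus \{A_1\}$ and $S \setminus \{A_2\}$. (This is in fact an instance of Shearer's lemma applied to the cover $\{S\setminus\{A_1\},\, S\setminus\{A_2\},\, \{A_1,A_2\}\}$, which covers every element of $S$ exactly twice; one could simply invoke Shearer, but a direct derivation is short and self-contained.)

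Next, let $T = S \setminus \{A_1,A_2\}$ and rewrite each joint entropy using the chain rule relative to $T$:
\begin{align*}
H(S) &= H(T) + H(A_1, A_2 \mid T),\\
H(S\setminus\{A_1\}) &= H(T) + H(A_2 \mid T),\\
H(S\setminus\{A_2\}) &= H(T) + H(A_1 \mid T).
\end{align*}
Plugging these into the displayed inequality and cancelling the $2H(T)$ from both sides, the claim reduces to
\begin{equation*}
2\,H(A_1, A_2 \mid T) \;\leq\; H(A_1 \mid T) + H(A_2 \mid T) + H(A_1;A_2).
\end{equation*}

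Finally, this reduced inequality follows by summing two standard entropy facts. By subadditivity of conditional entropy, $H(A_1,A_2 \mid T) \leq H(A_1 \mid T) + H(A_2 \mid T)$. By the fact that conditioning cannot increase entropy, $H(A_1,A_2 \mid T) \leq H(A_1;A_2)$. Adding the two yields exactly the needed bound, completing the proof. I do not anticipate any serious obstacle: the only ``trick'' is recognizing the correct reformulation in step one and choosing $T$ as the common part so that everything collapses after the chain rule; once that is done, the two inequalities used at the end are textbook facts.
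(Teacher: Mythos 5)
Your proof is correct. The paper itself does not present an argument for Theorem~\ref{Th4}: it only states the bound and, later (in the proof of Theorem~\ref{Th9}), cites the underlying entropy inequality $H(S) \leq \frac{1}{2}[H(S \backslash \{A_1\}) + H(S \backslash \{A_2\}) + H(A_1;A_2)]$ from Theorem~6.1 of \cite{Chanda2010}. Your first step --- cancelling $\sum_i H(A_i)$ against the definition $TCI(S) = \sum_i H(A_i) - H(S)$ --- correctly identifies that this entropy inequality is the entire content of the theorem, and your derivation of it is sound: the identification of the cover $\{S\setminus\{A_1\},\, S\setminus\{A_2\},\, \{A_1,A_2\}\}$ as a double cover of $S$ (Shearer), followed by the chain-rule decomposition relative to $T = S\setminus\{A_1,A_2\}$ and the two textbook facts $H(A_1,A_2\mid T) \leq H(A_1\mid T) + H(A_2\mid T)$ and $H(A_1,A_2\mid T) \leq H(A_1;A_2)$, gives exactly $2H(A_1,A_2\mid T) \leq H(A_1\mid T) + H(A_2\mid T) + H(A_1;A_2)$, which is what remains after cancelling $2H(T)$. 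The degenerate case $k=2$ (where $T=\emptyset$) also checks out, reducing to ordinary subadditivity. Since the paper's own proof is not reproducible here, I cannot say whether your route matches it, but your argument is self-contained, elementary, and complete, which is arguably preferable to the paper's bare citation.
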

The above theorem computes a lower bound on $TCI(S)$ using entropy from the ancestor nodes. We first use it recursively in computing the upper bound of $H(S)$ in a greedy fashion - first obtain its two-attribute subset (say $(A_i;A_j)$) with maximum pair-wise entropy and then recursively compute upper bounds of the entropies $H(S \backslash \{A_i\}$ and $H(S \backslash \{A_j\})$. The upper bound on $H(S)$ is then used to compute the lower bound of $TCI(S)$.

%%%%%%%%%%%%%%%%%%%%%%%%%%%%%%%%%  Theorem 4 %%%%%%%%%%%%%%%%%%%%%%%%%%%%%%%%%%%%%%%%%%%%%%%%%%%%%
\begin{theorem}
  \begin{eqnarray}
  TCI(S) \leq TCI(S \backslash \{A_t\}) + \min \{H(S \backslash \{A_t\},H(A_t))\} \nonumber
  \end{eqnarray}
  \label{Th5}
  \vspace{-3mm}
\end{theorem}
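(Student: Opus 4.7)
The plan is to rewrite the difference $TCI(S) - TCI(S\setminus\{A_t\})$ directly from Definition \ref{def3}, identify it with a mutual information, and then invoke the standard upper bound $I(X;Y) \leq \min\{H(X), H(Y)\}$.

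First I would apply the definition of TCI to both terms. Since $TCI(S) = \sum_{i=1}^{k} H(A_i) - H(S)$ and $TCI(S\setminus\{A_t\}) = \sum_{i\neq t} H(A_i) - H(S\setminus\{A_t\})$, subtracting telescopes the individual-entropy sums and leaves
\begin{eqnarray}
TCI(S) - TCI(S\setminus\{A_t\}) = H(A_t) + H(S\setminus\{A_t\}) - H(S). \nonumber
\end{eqnarray}
The task then reduces to bounding the right-hand side by $\min\{H(S\setminus\{A_t\}), H(A_t)\}$.

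Next I would observe that if we regard the attribute collection $S\setminus\{A_t\}$ as a single (joint) random variable $\vec{X}$, then the right-hand side is exactly the bivariate mutual information $I(A_t;\vec{X}) = H(A_t) + H(\vec{X}) - H(A_t,\vec{X})$, with $H(A_t,\vec{X}) = H(S)$. Since in the bivariate case the KWII coincides with mutual information (which is non-negative), we have the standard identities $I(A_t;\vec{X}) = H(A_t) - H(A_t\mid \vec{X}) \leq H(A_t)$ and $I(A_t;\vec{X}) = H(\vec{X}) - H(\vec{X}\mid A_t) \leq H(\vec{X})$, because conditional entropy is non-negative. Taking the minimum of the two bounds yields
\begin{eqnarray}
H(A_t) + H(S\setminus\{A_t\}) - H(S) \leq \min\{H(S\setminus\{A_t\}),H(A_t)\}, \nonumber
\end{eqnarray}
and rearranging gives the claimed inequality.

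There is essentially no obstacle here beyond the bookkeeping; every step rests on entropy identities already implicit in Definitions \ref{def1}--\ref{def3}, namely the non-negativity of conditional entropy (equivalently, that adding variables cannot decrease joint entropy). The only point requiring a moment's care is to ensure the algebra of the telescoped sum is carried out correctly so that the two bounds on $I(A_t;\vec{X})$ correspond exactly to the two arguments of the $\min$ appearing in the theorem statement.
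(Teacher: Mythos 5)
Your proof is correct: the telescoping of the single-attribute entropy sums gives $TCI(S)-TCI(S\setminus\{A_t\}) = H(A_t)+H(S\setminus\{A_t\})-H(S) = I(A_t;\vec{X})$, and bounding this mutual information by $\min\{H(A_t),H(\vec{X})\}$ via non-negativity of conditional entropy yields exactly the stated inequality. This is the same route the paper takes implicitly --- the identical telescoped identity appears in the proof of Proposition~\ref{Prop1} (where the difference is written as $H(A_k)-H(A_k|A_1\ldots A_{k-1})$) --- so no further comment is needed.
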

The theorem computes a upper bound on $TCI(S)$ using TCI and entropy of its parent node $\{S \backslash \{A_t\}\}$ and $H(A_t)$. The next two theorems are used to compute the upper and lower bounds of the node $\{S;A_j\}$ using entropy of its sibling $\{S;A_i\}$, entropies of individual attributes and conditional entropies. Note that each conditional entropy of form $H(A_i|A_j)$ is given by $H(A_i;A_j)-H(A_j)$.

%%%%%%%%%%%%%%%%%%%%%%%%%%%%%%%%%  Theorem 5 Sibling %%%%%%%%%%%%%%%%%%%%%%%%%%%%%%%%%%%%%%%%%%%%%%%%%%%%%
\begin{theorem}
  \begin{eqnarray}
  TCI(S;A_j) \geq  \sum_{t=1}^{k} H(A_t) + H(A_j) - H(S;A_i) \nonumber \\
  - \min_{t=1}^{k}\{H(A_j|A_t)\}\nonumber
  \end{eqnarray}
  \vspace{-3mm}
  \label{Th6}
\end{theorem}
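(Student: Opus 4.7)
The plan is to start from the TCI definition applied to the $(k+1)$-element set $\{S; A_j\} = \{A_1; \ldots; A_k; A_j\}$, namely
\begin{eqnarray}
TCI(S;A_j) = \sum_{t=1}^{k} H(A_t) + H(A_j) - H(S;A_j), \nonumber
\end{eqnarray}
so the theorem reduces to establishing the upper bound
\begin{eqnarray}
H(S;A_j) \;\leq\; H(S;A_i) + \min_{t=1}^{k} H(A_j \mid A_t). \nonumber
\end{eqnarray}
Once this is in hand, substituting it into the TCI identity and flipping the inequality yields exactly the stated lower bound.

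To prove the upper bound on $H(S;A_j)$, I would proceed in two short steps. First, apply the chain rule for entropy to split the joint entropy involving $A_j$: $H(S;A_j) = H(S) + H(A_j \mid S)$. Second, bound each of the two resulting terms individually. For the first term, $H(S) \leq H(S;A_i)$ by the monotonicity of joint entropy (adding the sibling attribute $A_i \notin S$ can only increase the joint uncertainty). For the second term, invoke the standard ``conditioning reduces entropy'' inequality: for every $A_t \in S$ we have $H(A_j \mid S) \leq H(A_j \mid A_t)$, since the right-hand side is conditioning on less information. Because this holds for every $t = 1, \ldots, k$, we may replace the right-hand side by the minimum over $t$ and still have $H(A_j \mid S) \leq \min_{t=1}^{k} H(A_j \mid A_t)$.

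Adding the two bounds gives the desired inequality on $H(S;A_j)$. Substituting back into the TCI expression and rearranging yields
\begin{eqnarray}
TCI(S;A_j) \;\geq\; \sum_{t=1}^{k} H(A_t) + H(A_j) - H(S;A_i) - \min_{t=1}^{k} H(A_j \mid A_t), \nonumber
\end{eqnarray}
which is the claimed bound. There is no real obstacle in this proof; the only mildly subtle point is recognizing that the algorithmic motivation (using only the sibling's joint entropy $H(S;A_i)$ plus single-pair conditional entropies $H(A_j \mid A_t) = H(A_j;A_t) - H(A_t)$, all of which are already cached from earlier levels of the BFS tree) dictates which two inequalities to use: monotonicity of $H$ to introduce $A_i$, and conditioning-reduces-entropy to replace the awkward $H(A_j \mid S)$ with the cheaply computable quantity $\min_t H(A_j \mid A_t)$.
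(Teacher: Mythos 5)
Your proof is correct: the chain-rule split $H(S;A_j)=H(S)+H(A_j\mid S)$, followed by monotonicity of joint entropy to get $H(S)\leq H(S;A_i)$ and conditioning-reduces-entropy to get $H(A_j\mid S)\leq \min_{t}H(A_j\mid A_t)$, gives exactly the required upper bound on $H(S;A_j)$ and hence the stated lower bound on $TCI(S;A_j)$. The paper states Theorem~\ref{Th6} without a printed proof (deferring to \cite{Chanda2010}), but your argument is precisely the one its surrounding discussion indicates — bounding the new node's joint entropy via the cached sibling entropy $H(S;A_i)$ and the pairwise conditional entropies $H(A_j\mid A_t)=H(A_j;A_t)-H(A_t)$ — so there is nothing to correct.
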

%%%%%%%%%%%%%%%%%%%%%%%%%%%%%%%%%  Theorem 6 Sibling s%%%%%%%%%%%%%%%%%%%%%%%%%%%%%%%%%%%%%%%%%%%%%%%%%%%%%
\begin{theorem}
  \begin{eqnarray}
  TCI(S;A_j) \leq  \sum_{t=1}^{k} H(A_t) + H(A_j) - H(S;A_i) + \Lambda \nonumber \\
  \mbox{where,} \hspace{10mm} \Lambda = \min\{H(A_i|A_j),\min_{t=1}^{k}\{H(A_j|A_t)\}\}\nonumber
  \end{eqnarray}
  \label{Th7}
  \vspace{-3mm}
\end{theorem}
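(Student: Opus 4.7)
The plan is to translate the claimed upper bound on $TCI(S;A_j)$ into an equivalent lower bound on the joint entropy $H(S;A_j)$. By Definition~\ref{def3}, $TCI(S;A_j) = \sum_{t=1}^{k} H(A_t) + H(A_j) - H(S;A_j)$, so the stated inequality is equivalent to $H(S;A_j) \geq H(S;A_i) - \Lambda$. Once this joint-entropy inequality is established, substitution back into the $TCI$ definition gives the theorem, mirroring the strategy behind the companion lower bound of Theorem~\ref{Th6}.

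To establish the joint-entropy inequality I would pass through the augmented joint $H(S;A_i;A_j)$ and expand it via the chain rule in two different groupings:
\[
H(S;A_i;A_j) \;=\; H(S;A_i) + H(A_j\mid S;A_i) \;=\; H(S;A_j) + H(A_i\mid S;A_j).
\]
Non-negativity of the conditional term $H(A_j\mid S;A_i)$ implies $H(S;A_i;A_j)\geq H(S;A_i)$, and substituting this into the right-hand identity yields
\[
H(S;A_j) \;\geq\; H(S;A_i) - H(A_i\mid S;A_j).
\]
It therefore suffices to upper-bound the residual $H(A_i\mid S;A_j)$ by $\Lambda$.

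For that final step I would invoke the standard fact that dropping conditioning variables can only increase conditional entropy. Dropping all of $S$ from the conditioning gives $H(A_i\mid S;A_j) \leq H(A_i\mid A_j)$, while dropping $A_j$ together with every element of $S$ except some chosen $A_t$ gives the family $H(A_i\mid S;A_j) \leq H(A_i\mid A_t)$ for $t=1,\dots,k$. Taking the minimum over these upper bounds matches the $\min$-structure of $\Lambda$, and substitution into the $TCI$ expression closes the argument.

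The place requiring the most care is purely bookkeeping: matching the exact form of $\Lambda$ appearing in the statement. The chain-rule-plus-monotonicity pipeline above naturally produces a bound whose second component involves $H(A_i\mid A_t)$, i.e.\ the mirror of the $H(A_j\mid A_t)$ used in Theorem~\ref{Th6}. Recovering the stated expression may require invoking the submodularity inequality $H(X\cup Y) + H(X\cap Y) \leq H(X)+H(Y)$ on a well-chosen pair (for instance $X = S\cup\{A_i\}$ and a $Y$ designed to isolate $A_j$ and $A_t$), or noting a transposition of the subscripts $i,j$ in the stated conditional terms; either way the proof engine — reduce to a joint-entropy bound, apply two chain-rule decompositions of $H(S;A_i;A_j)$, and close by ``conditioning reduces entropy'' — remains the same.
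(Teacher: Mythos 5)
Your reduction to a joint-entropy inequality and the chain-rule engine behind it are sound, and in fact the paper supplies no proof of this theorem at all (Theorems~\ref{Th4}--\ref{Th7} are stated and attributed to \cite{Chanda2010}), so there is nothing to match line by line; your argument is the natural one and parallels the proofs the paper does include (e.g.\ Theorem~\ref{Th9}). The two decompositions of $H(S;A_i;A_j)$, the monotonicity step $H(S;A_i;A_j)\ge H(S;A_i)$, and the bounds $H(A_i\mid S;A_j)\le H(A_i\mid A_j)$ and $H(A_i\mid S;A_j)\le H(A_i\mid A_t)$ for each $A_t\in S$ are all correct, and together they prove
\[
TCI(S;A_j)\;\le\;\sum_{t=1}^{k}H(A_t)+H(A_j)-H(S;A_i)+\min\Bigl\{H(A_i\mid A_j),\ \min_{t=1}^{k}H(A_i\mid A_t)\Bigr\}.
\]

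The one point you leave unresolved --- whether the second component of $\Lambda$ is really $\min_t H(A_j\mid A_t)$ as printed or $\min_t H(A_i\mid A_t)$ as your derivation produces --- should be settled decisively in favor of your version: the printed form is not merely hard to reach, it is false, so the submodularity escape hatch cannot work. Take $S=\{A_1\}$ with $A_1$ a fair coin, $A_j$ a constant, and $A_i$ a fair coin independent of $A_1$. Then $TCI(S;A_j)=0$, whereas $\sum_t H(A_t)+H(A_j)-H(S;A_i)+\Lambda=1+0-2+\min\{1,0\}=-1$, because $\min_t H(A_j\mid A_t)=0$; the stated inequality reads $0\le -1$. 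Equivalently, since $\Lambda$ is a minimum, the statement asserts in particular $H(S;A_i)-H(S;A_j)\le\min_t H(A_j\mid A_t)$, which cannot hold when $A_j$ carries no information but $A_i$ does. Every term in the correct $\Lambda$ must be a conditional entropy of the \emph{new} attribute $A_i$ of the sibling node, exactly mirroring Theorem~\ref{Th6}, where every term is a conditional entropy of $A_j$; the subscripts in the second component are simply transposed in the statement. With that correction your proof is complete, and you should drop the hedge rather than leave the final step open.
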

%%%%%%%%%%%%%%%%%%%%%%%%%%%%%%%%%  Theorem 7 %%%%%%%%%%%%%%%%%%%%%%%%%%%%%%%%%%%%%%%%%%%%%%%%%%%%%

\vspace{-5mm}

\subsection{Bounds on CACI}
In obtaining the upper and lower bounds, we shall assume CACI computations on the attribute set $S = \{A_1;\cdots;A_k\} \subseteq \zeta$ and class variable $C$ unless otherwise stated. A lower bound on CACI is given by the following theorem.
%%%%%%%%%%%%%%%%%%%%%%%%%%%%%%%%%  Theorem 9 %%%%%%%%%%%%%%%%%%%%%%%%%%%%%%%%%%%%%%%%%%%%%%%%%%%%%
\begin{theorem}
  \begin{equation}
  CACI(S;C) \geq H(C) - \min_{i=1}^{k}H(C|A_i)
  \end{equation}
  \label{Th8}
\end{theorem}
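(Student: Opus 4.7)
The plan is to reduce $CACI(S;C)$ to a single conditional-entropy expression and then exploit the fact that conditioning on more information never increases entropy.

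First, I would expand $CACI(S;C) = TCI(S;C) - TCI(S)$ using Definition \ref{def3}. Writing out the two $TCI$ terms, the sums $\sum_{i=1}^{k} H(A_i)$ cancel, leaving
\begin{eqnarray}
CACI(S;C) &=& \Bigl[\sum_{i=1}^{k} H(A_i) + H(C) - H(S;C)\Bigr] \nonumber \\
          && - \Bigl[\sum_{i=1}^{k} H(A_i) - H(S)\Bigr] \nonumber \\
          &=& H(C) - \bigl(H(S;C) - H(S)\bigr) \nonumber \\
          &=& H(C) - H(C \mid S). \nonumber
\end{eqnarray}
So $CACI(S;C)$ is simply the mutual information $I(S;C)$ between the joint attribute $S$ and the class $C$.

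Next, I would invoke the standard information-theoretic fact that conditioning never increases entropy: for any single attribute $A_i \in S$,
\begin{eqnarray}
H(C \mid S) = H(C \mid A_1, \ldots, A_k) \leq H(C \mid A_i). \nonumber
\end{eqnarray}
Since this inequality holds for every $i \in \{1, \ldots, k\}$, it holds in particular for the index that minimizes the right-hand side, giving $H(C \mid S) \leq \min_{i=1}^{k} H(C \mid A_i)$.

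Finally, substituting this bound into the reduced expression for $CACI$ yields
\begin{eqnarray}
CACI(S;C) = H(C) - H(C \mid S) \geq H(C) - \min_{i=1}^{k} H(C \mid A_i), \nonumber
\end{eqnarray}
which is the claimed bound. There is no real obstacle here; the only subtlety is recognizing the telescoping that collapses $CACI$ into $I(S;C)$, after which the result is immediate from the monotonicity of entropy under conditioning.
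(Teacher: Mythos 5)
Your proposal is correct and follows essentially the same route as the paper: both reduce $CACI(S;C)$ to $H(C) - H(C\mid S)$ (the mutual information $I(S;C)$) and then apply the fact that conditioning on the full set $S$ cannot increase entropy relative to conditioning on any single $A_i$, hence $H(C\mid S) \leq \min_{i} H(C\mid A_i)$. Your write-up simply makes the telescoping of the two $TCI$ terms more explicit than the paper does.
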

\begin{proof}
  We have $CACI(S;C)$ = $H(A_1...A_k)+H(C)$ - $H(A_1...A_kC)$ = $H(C)$ - $H(C|A_1...A_k)$. Now the result follows from the fact that $H(C|A_1...A_k) \leq H(C|A_i)$ $\forall i=1...k$.
\end{proof}

The following theorem gives us an upper bound on CACI.
%%%%%%%%%%%%%%%%%%%%%%%%%%%%%%%%%  Theorem 10 %%%%%%%%%%%%%%%%%%%%%%%%%%%%%%%%%%%%%%%%%%%%%%%%%%%%%
\begin{theorem}
  \begin{eqnarray}
  && CACI(S;C) \leq  \nonumber \\
  && \min\{ \frac{1}{2}[H(S \backslash \{A_1\}) + H(S \backslash \{A_2\})+H(A_1;A_2)] , C\} \nonumber
  \end{eqnarray}
  \label{Th9}
\end{theorem}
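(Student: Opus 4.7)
The plan is to first rewrite $CACI(S;C)$ in a form where standard information-theoretic bounds apply cleanly, and then obtain each term inside the $\min$ by an independent argument, treating the $C$ appearing at the end of the statement as the intended $H(C)$ (which matches the other term in the minimum in units of bits).

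First I would simplify using the identity already derived in the proof of Theorem \ref{Th8}, namely
\begin{equation}
CACI(S;C) = H(S) + H(C) - H(S;C) = I(S;C),\nonumber
\end{equation}
so that $CACI(S;C)$ is exactly the mutual information between the joint attribute $\vec{S}=(A_1,\dots,A_k)$ and the class $C$. This rewriting is the workhorse: mutual information satisfies $I(S;C) \leq \min\{H(S),H(C)\}$, so the bound $CACI(S;C) \leq H(C)$ (the second term of the $\min$) is immediate from non-negativity of conditional entropy, and everything else reduces to bounding $H(S)$ from above.

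The main step is therefore to establish the purely entropic inequality
\begin{equation}
H(S) \leq \tfrac{1}{2}\bigl[H(S\setminus\{A_1\})+H(S\setminus\{A_2\})+H(A_1;A_2)\bigr].\nonumber
\end{equation}
I would derive this from two basic properties of entropy. Submodularity applied to the sets $U=S\setminus\{A_1\}$ and $V=S\setminus\{A_2\}$ (for which $U\cup V=S$ and $U\cap V=S\setminus\{A_1,A_2\}$) gives
\begin{equation}
H(S)+H(S\setminus\{A_1,A_2\}) \leq H(S\setminus\{A_1\})+H(S\setminus\{A_2\}).\nonumber
\end{equation}
Then subadditivity of joint entropy yields $H(S) \leq H(S\setminus\{A_1,A_2\})+H(A_1;A_2)$. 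Adding these two inequalities cancels the $H(S\setminus\{A_1,A_2\})$ term and, after dividing by two, produces exactly the claimed upper bound on $H(S)$. Combining with $CACI(S;C)=I(S;C)\leq H(S)$ yields the first term in the $\min$.

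The only real obstacle is the entropy inequality above, and it is already implicit in Theorem \ref{Th4} (which phrases the same inequality as a lower bound on $TCI(S)=\sum_i H(A_i)-H(S)$), so in fact the bound on $H(S)$ can be cited directly from Theorem \ref{Th4} rather than rederived. Once both arms of the $\min$ are bounded, taking the minimum concludes the proof. No pruning or combinatorial argument is needed; the result is essentially a packaging of mutual-information bounds together with Han-type entropy inequalities.
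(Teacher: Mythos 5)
Your proposal is correct and follows essentially the same route as the paper: the paper likewise writes $CACI(S;C)=H(C)-H(C|S)=H(S)-H(S|C)$ to get $CACI(S;C)\leq\min\{H(S),H(C)\}$ (confirming that the bare $C$ in the statement is a typo for $H(C)$), and then bounds $H(S)$ by $\frac{1}{2}[H(S\backslash\{A_1\})+H(S\backslash\{A_2\})+H(A_1;A_2)]$. The only difference is that the paper cites this last entropy inequality from an earlier reference, whereas you supply the short submodularity-plus-subadditivity derivation and correctly observe it is the same inequality underlying Theorem \ref{Th4}.
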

\begin{proof}
  We have $CACI(S;C) = H(C) - H(C|S)$ so that $CACI(S;C) \leq H(C)$. Again, $CACI(S;C) = H(S) - H(S|C)$ so that $CACI(S;C) \leq H(S)$. Thus clearly,
  $CACI(S;C) \leq \min\{H(C),H(S)\}$. But $H(S) \leq $ $\frac{1}{2}[H(S \backslash \{A_1\})$ + $H(S \backslash \{A_2\})$ + $H(A_1;A_2)]$ (Theorem 6.1 eq 6.3 in \cite{Chanda2010}, the result follows from that.
\end{proof}

\section{Statistical Significance of Correlation and Interaction Information}
\subsection{Probability Distribution of TCI}
In this section, we state results on the probability distribution of the TCI using a Taylor series based approximation to the TCI \cite{Chanda2010}. This shall be used to evaluate the significance of the correlation information of an attribute set.
%%%%%%%%%%%%%%%%%%%%%%%%%%%%%%%%%  Theorem 8 %%%%%%%%%%%%%%%%%%%%%%%%%%%%%%%%%%%%%%%%%%%%%%%%%%%%%
\begin{theorem}
  The distribution of $\widehat{TCI}(A_1;\cdots;A_k)$ can be approximated by a gamma distribution with scale parameter = $1/(N \hspace{1mm} ln(2))$ and shape parameter = ${df}_{TCI}/2$.
  \label{Th11}
\end{theorem}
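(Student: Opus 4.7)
The plan is to apply the standard Taylor-series/asymptotic argument that connects Kullback--Leibler-type statistics with the chi-squared distribution, and then invoke the fact that a chi-squared variable is a rescaled gamma. Because TCI is literally the KL-divergence between the joint distribution $p(a_1,\ldots,a_k)$ and the product of marginals $\prod_i p(a_i)$, the natural null hypothesis is mutual independence of $A_1,\ldots,A_k$, under which $TCI=0$.

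First I would write the plug-in estimator $\widehat{TCI}$ using empirical frequencies $\hat{p}(a_1,\ldots,a_k)=n_{a_1\ldots a_k}/N$ and the corresponding marginals. Expanding $\log_2$ to second order about the independence point $p(a_1,\ldots,a_k)=\prod_i p(a_i)$ and collecting terms, the first-order contribution vanishes (it is a signed probability difference that sums to zero), leaving a quadratic form
\begin{equation}
\widehat{TCI} \;\approx\; \frac{1}{2\ln 2}\sum_{a_1,\ldots,a_k}\frac{\bigl(\hat p(a_1,\ldots,a_k)-\prod_i \hat p(a_i)\bigr)^{2}}{\prod_i \hat p(a_i)} \nonumber
\end{equation}
to leading order. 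This is exactly (up to the $1/\ln 2$ coming from the base-2 logarithm) the Pearson-type statistic for testing mutual independence in a $k$-way contingency table.

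Next I would invoke the classical asymptotic result that under the independence null, $2N\ln(2)\,\widehat{TCI}$ converges in distribution to $\chi^{2}_{df_{TCI}}$, where $df_{TCI}$ is the difference between the number of free parameters in the saturated multinomial model on $\prod_i |V_i|$ cells and in the independence model (i.e. $\prod_i|V_i|-1-\sum_i(|V_i|-1)$). This can be derived either via the delta method applied to the multinomial MLEs or directly from Wilks' theorem, since $2N\ln(2)\widehat{TCI}$ equals (twice the natural log of) the likelihood ratio between the saturated and independence multinomial models.

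Finally, a $\chi^{2}_{d}$ random variable has the gamma density with shape $d/2$ and scale $2$. Dividing by the deterministic constant $2N\ln 2$ rescales the scale parameter to $1/(N\ln 2)$ while leaving the shape parameter $df_{TCI}/2$ unchanged, yielding the stated gamma approximation. The main obstacle is justifying the Taylor truncation and the passage from the quadratic form to $\chi^{2}$: one must verify the regularity conditions so that remainder terms are $o_{P}(1/N)$ and the $\hat p$'s in the denominators can be replaced by the true $p$'s without affecting the limit, which is a standard but slightly tedious multinomial CLT argument; the computation of $df_{TCI}$ via the log-linear independence model is the other piece that must be stated explicitly.
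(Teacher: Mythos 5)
Your proposal is correct and follows essentially the same route the paper takes: a second-order Taylor expansion of the plug-in $\widehat{TCI}$ about the independence point reduces it to a Pearson-type statistic, so that $2N\ln(2)\,\widehat{TCI}$ is approximately $\chi^2_{df_{TCI}}$, which after rescaling is the gamma law with scale $1/(N\ln 2)$ and shape $df_{TCI}/2$ (the paper defers the TCI case to its cited prior work but carries out the identical argument explicitly for the CACI analogue in Theorems~\ref{Th12}--\ref{Th13}). Your explicit identification of $df_{TCI}$ with the mutual-independence log-linear model degrees of freedom, and your remark on controlling the Taylor remainder and replacing $\hat p$ by $p$ in the denominators, are consistent with and slightly more careful than the paper's treatment.
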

Using theorem \ref{Th11}, the $Pvalue$ of an observed TCI value $t$ is given by $Prob(TCI > t)$.

Next we derive the probability distribution of the $CACI$ random variable.

\vspace{-3mm}

\subsection{Probability Distribution of CACI}
We derive the probability distribution of the CACI. The proof is very similar to the one for TCI.
\begin{theorem}
  Let $S = \{A_1;...;A_k\}$ denote a set of variables and $C$ be a class variable. Let $\vec{A}$ represent a new variable formed by the joint of all attributes in $S$. Then the CACI can be approximated as,
  \begin{eqnarray}
  CACI(S;C) \hspace{-1mm}\approx \hspace{-1mm} \frac{1}{2ln(2)} \sum_{a_1,..,a_k,c}\hspace{-2mm}\frac{(p(\vec{a},c)-p(\vec{a})p(c))^2}{p(\vec{a})p(c)}\nonumber
  \end{eqnarray}
  \label{Th12}
\end{theorem}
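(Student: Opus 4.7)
The plan is to mimic the TCI proof referenced in the paper: express $CACI(S;C)$ as a KL-type sum and then expand the logarithm in a Taylor series around the independence point, where the ratio inside the log equals 1. Concretely, I would first rewrite
\[
CACI(S;C) = \sum_{\vec a,\,c} p(\vec a,c)\log_2\!\frac{p(\vec a,c)}{p(\vec a)\,p(c)},
\]
which follows directly from Definition \ref{def4} after collapsing $(a_1,\ldots,a_k)$ into the joint variable $\vec a$. In this form the CACI is just the mutual information $I(\vec A;C)$, so the argument reduces to the well-known quadratic approximation of mutual information.

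Next I would introduce the deviation from independence
\[
\epsilon(\vec a,c) \;=\; \frac{p(\vec a,c)-p(\vec a)p(c)}{p(\vec a)p(c)},
\]
so that $p(\vec a,c) = p(\vec a)p(c)\bigl(1+\epsilon(\vec a,c)\bigr)$. Substituting and using $\log_2(1+\epsilon) = \frac{1}{\ln 2}\bigl(\epsilon - \tfrac{1}{2}\epsilon^2 + O(\epsilon^3)\bigr)$ gives
\[
p(\vec a,c)\log_2\!\frac{p(\vec a,c)}{p(\vec a)p(c)} = \frac{p(\vec a)p(c)}{\ln 2}\bigl(1+\epsilon\bigr)\bigl(\epsilon - \tfrac{1}{2}\epsilon^2 + \cdots\bigr).
\]
Expanding and keeping terms up to second order in $\epsilon$, the integrand is $\frac{1}{\ln 2}\,p(\vec a)p(c)\bigl(\epsilon + \tfrac{1}{2}\epsilon^2\bigr)$ plus higher-order terms.

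Summing over $(\vec a, c)$, the linear term vanishes because
\[
\sum_{\vec a,c} p(\vec a)p(c)\,\epsilon(\vec a,c) = \sum_{\vec a,c}\bigl[p(\vec a,c)-p(\vec a)p(c)\bigr] = 1-1 = 0,
\]
so only the quadratic term survives, yielding precisely
\[
CACI(S;C) \approx \frac{1}{2\ln 2}\sum_{\vec a,c}\frac{(p(\vec a,c)-p(\vec a)p(c))^2}{p(\vec a)p(c)},
\]
as claimed.

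The main obstacle is not the algebra but making the approximation step rigorous: the Taylor expansion of $\log_2(1+\epsilon)$ requires $|\epsilon|$ to be small, which is only plausible near the null (independence) regime that is exactly the scenario of interest for deriving a sampling distribution. I would therefore state the approximation as a second-order expansion valid when the joint is close to the product of marginals, mirroring the wording used for the TCI result in Theorem \ref{Th11}, and note that the higher-order remainder is controlled by $\max_{\vec a,c}|\epsilon(\vec a,c)|$. The remainder of the argument (passing from this quadratic form to a scaled chi-square / gamma distribution for the empirical CACI) then follows exactly as for TCI, giving a gamma approximation with suitable shape and scale parameters.
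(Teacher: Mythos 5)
Your proposal is correct and is essentially the paper's own argument: the paper Taylor-expands $f(\psi_1)=\psi_1\log_2(\psi_1/\psi_2)$ about $\psi_1=\psi_2$ with $\psi_1=p(\vec a,c)$ and $\psi_2=p(\vec a)p(c)$, which is exactly your expansion of $\log_2(1+\epsilon)$ after the substitution $\psi_1=\psi_2(1+\epsilon)$. In both versions the linear term vanishes because the probabilities each sum to one, and the surviving quadratic term yields the stated $\chi^2$-type form, so the only difference is the (welcome) explicit remark that the truncation is justified near independence.
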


\begin{proof}
  Let $p(\vec{a}) = \psi_1$ and $p(\vec{a})p(c) = \psi_2$. \\
  Let $f(\psi_1) = p(\vec{a})log_{2}(\frac{p(\vec{a})}{p(\vec{a})p(c)})$ = $\psi_1 log_2 (\frac{\psi_1}{\psi_2})$ = $\frac{\psi_1}{ln(2)} ln (\frac{\psi_1}{\psi_2})$. \\
  Using Taylor's expansion of $f(\psi_1)$ about $\psi_1 = \psi_2$, we have,\\
  $f(\psi_1) = f(\psi_2) + f'(\psi_2)\frac{(\psi_1-\psi_2)}{1!} + f''(\psi_2)\frac{(\psi_1-\psi_2)^2}{2!} + ...$ \\
  Here, $f'(\psi_1)$ = $\frac{ln(\psi_1)+1-ln(\psi_2)}{ln(2)}$ and $f''(\psi_1)=\frac{1}{\psi_1 ln(2)}$. Therefore, $f(\psi_1) = \frac{\psi_1 - \psi_2}{ln(2)} + \frac{1}{2ln(2)\psi_2}(\psi_1-\psi_2)^2 + ...$. \\
  Ignoring higher order terms in the Taylor's expansion,

  $CACI(S;C)$ \hspace{-2mm} = \hspace{-2mm} $\sum_{a_1,..,a_k}f(\psi_1)$ = \hspace{-2mm} $\sum_{a_1,..,a_k}$ $\frac{\psi_1 - \psi_2}{ln(2)}$ + $\frac{1}{2ln(2)\psi_2}(\psi_1 - \psi_2)^2$ + $...$ $\approx$ \hspace{-2mm} $\sum_{a_1,..,a_k}\frac{\psi_1}{ln(2)}$ - $\sum_{a_1,..,a_k} \frac{\psi_2}{ln(2)}$ + $\sum_{a_1,..,a_k} \frac{(\psi_1 - \psi_2)^2}{2ln(2)\psi_2}$.
  The first two summations sum to $1/ln(2)$ resulting in theorem \ref{Th12}.
\end{proof}

Again, the expression of $CACI$ is related to the two-dimensional statistical $\chi^2$ test \cite{Sheskin} defined as,
\begin{equation}
\chi^2 = \sum_{i_1,i_2} \frac{(O_{i_1,i_2}-E_{i_1,i_2})^2}{E_{i_1,i_2}}
\end{equation}
where the summation is over the cells of the $2$-dimensional contingency table, $O_{i_1,i_2}$ denotes the observed cell count and $E_{i_1,i_2}$ denotes the expected cell count for cell $i_1,i_2$. The degrees of freedom present in a $2-$ dimensional contingency table is ${df}_{CACI}$ = $R_1 R_2 - R_1 - R_2 + 1$ \cite{Sheskin}. Here $R_1$ denotes the count of distinct values that variable $\vec{A}$ can take, while $R_2$ denotes the count of distinct values that variable $C$ can take. Equating the observed and expected cell counts to the relative frequencies and the cell probabilities, it can be easily observed that,
\begin{equation}
\chi^2 = 2\hspace{1mm}N\hspace{1mm}ln(2)\hspace{1mm} \widehat{CACI}(A_1;\cdots;A_k;C)
\label{eq13}
\end{equation}
where $N$ denotes the total number of samples in the data (i.e. sum of cell counts in all cells of the $2$-dimensional contingency table). $\widehat{CACI}$ represents the approximation to the $CACI$ metric. Using theorem \ref{Th11} and equation \ref{eq13} it can be easily proved that,

\begin{theorem}
  The distribution of $\widehat{CACI}(A_1;\cdots;A_k;C)$ can be approximated by a gamma distribution with scale parameter = $1/(N \hspace{1mm} ln(2))$ and shape parameter = ${df}_{CACI}/2$.
  \label{Th13}
\end{theorem}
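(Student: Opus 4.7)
The plan is to piggy-back on the scaffolding already assembled in the excerpt: Theorem 12 gives a Taylor-series approximation of $CACI$ in the familiar $(\text{observed}-\text{expected})^2/\text{expected}$ form, equation (13) identifies this scaled approximation with a Pearson $\chi^2$ statistic, and Theorem 11 tells us the TCI analogue. So rather than redoing the Taylor expansion, I would re-use these three facts and reduce Theorem 13 to a one-line change of variables between a $\chi^2$ random variable and a gamma random variable.

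Concretely, I would first invoke Theorem 12 to write
\begin{equation*}
\widehat{CACI}(A_1;\cdots;A_k;C) \approx \frac{1}{2N\ln(2)} \sum_{\vec{a},c} \frac{(N p(\vec{a},c) - N p(\vec{a}) p(c))^2}{N p(\vec{a}) p(c)},
\end{equation*}
and then appeal to equation (13), which identifies the inner sum with the Pearson $\chi^2$ statistic on the two-way contingency table whose rows are indexed by values of the composite attribute $\vec{A}$ and whose columns are indexed by values of $C$. This yields the clean relation $\chi^2 = 2N\ln(2)\,\widehat{CACI}(A_1;\cdots;A_k;C)$ with degrees of freedom $df_{CACI} = R_1 R_2 - R_1 - R_2 + 1$.

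Next I would use the standard asymptotic fact that a Pearson $\chi^2$ statistic on a two-way contingency table with $df_{CACI}$ degrees of freedom converges in distribution to a $\chi^2_{df_{CACI}}$ random variable under the (independence) null hypothesis. Combined with the classical identity that a $\chi^2_{\nu}$ variable is exactly $\text{Gamma}(\text{shape}=\nu/2,\ \text{scale}=2)$, I would conclude that $2N\ln(2)\,\widehat{CACI}$ is approximately $\text{Gamma}(df_{CACI}/2,\, 2)$. Finally, dividing a Gamma by a positive constant rescales only the scale parameter, so $\widehat{CACI}$ is approximately $\text{Gamma}\!\left(df_{CACI}/2,\, 1/(N\ln(2))\right)$, which is exactly the claim.

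The only genuinely non-routine step is the asymptotic chi-square step: this is where the null-hypothesis assumption, large-sample regime, and the definition of $df_{CACI}$ come in. Everything else is algebra that the excerpt has already done. The main obstacle, such as it is, would be making precise exactly which null model the distribution is conditional on (here: independence of $\vec{A}$ and $C$) and ensuring that the Taylor remainder neglected in Theorem 12 is $o_p$ of the leading quadratic term so that the gamma approximation is legitimate; I would note this once in passing and then mirror the argument already sketched for Theorem 11, rather than redevelop it.
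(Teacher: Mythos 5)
Your proposal is correct and follows essentially the same route the paper takes (which it leaves implicit, saying only that the result follows from Theorem~\ref{Th11} and equation~\ref{eq13}): identify $2N\ln(2)\,\widehat{CACI}$ with the Pearson $\chi^2$ statistic on the $\vec{A}\times C$ contingency table, invoke its asymptotic $\chi^2_{df_{CACI}}$ distribution under independence, and rescale the resulting gamma. Your explicit remarks about the null model and the Taylor remainder are a welcome tightening of an argument the paper merely asserts.
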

Using theorem \ref{Th13}, the $Pvalue$ of an observed CACI value $t$ is calculated as $Prob(CACI > t)$.

\vspace{-3mm}

\subsection{Probability Distribution of $\Delta$ CACI}
\begin{theorem}
Let $S_c$ = $\{A_1;A_2;\cdots;A_k;C\}$, let $X_c$ = $S \backslash \{A_k\}$ = $\{A_1;A_2;\cdots;A_{k-1};C\}$. Then $\Delta_{CACI}$ = $CACI(S_c)$ - $CACI(X_c)$. Let $|A_k|$ represent the number of states of the attribute $A_k$. Let $\vec{A}$ represent a new variable formed by the joint of all attributes in $\{A_1;A_2;\cdots;A_{k-1}\}$ and  $\vec{a}$ represents its realizations. The distribution of $\Delta_{CACI}$ can be approximated by a gamma distribution with scale parameter = $1/(N \hspace{1mm} ln(2))$ and shape parameter = $|\vec{A}|{df}_{CACI}/2$.
  \label{Th14}
\end{theorem}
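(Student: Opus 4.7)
The plan is to recognize $\Delta_{CACI}$ as a conditional mutual information and then apply Theorem \ref{Th13} separately within each conditioning stratum of $\vec{A}$. First I would rewrite Definition \ref{def4} in the compact form $CACI(S;C) = H(C) - H(C\mid S) = I(A_1,\ldots,A_k;C)$, exactly as is done in the proof of Theorem \ref{Th8}, which identifies $CACI$ with a mutual information between the joint predictor variable and the class. Applying the chain rule of mutual information, $I(\vec{A},A_k;C) = I(\vec{A};C) + I(A_k;C\mid\vec{A})$, immediately gives $\Delta_{CACI} = I(A_k;C\mid\vec{A}) = \sum_{\vec{a}} p(\vec{a})\, I(A_k;C\mid \vec{A}=\vec{a})$, exhibiting $\Delta_{CACI}$ as a weighted average of stratum-wise mutual informations between $A_k$ and $C$.

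Next, for each realization $\vec{a}$ of $\vec{A}$, the term $I(A_k;C\mid\vec{A}=\vec{a})$ is precisely the $CACI$ of $A_k$ versus $C$ computed on the sub-sample of size $N_{\vec{a}} = N\,\hat{p}(\vec{a})$ for which $\vec{A}=\vec{a}$. Since this is a two-way contingency structure between $A_k$ and $C$ with degrees of freedom ${df}_{CACI}$ (unchanged from Theorem \ref{Th13}, as only the effective sample size varies), I would apply Theorem \ref{Th13} within each stratum to conclude that $\widehat{I}(A_k;C\mid \vec{A}=\vec{a})$ is approximately gamma-distributed with scale $1/(N_{\vec{a}} \ln 2)$ and shape ${df}_{CACI}/2$; equivalently, $2\ln(2)\, N_{\vec{a}}\, \widehat{I}(A_k;C\mid\vec{A}=\vec{a})$ is approximately $\chi^2$ with ${df}_{CACI}$ degrees of freedom.

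Then I would aggregate across strata: the identity $N\,\widehat{\Delta}_{CACI} = \sum_{\vec{a}} N_{\vec{a}}\, \widehat{I}(A_k;C\mid \vec{A}=\vec{a})$ (obtained by absorbing $\hat{p}(\vec{a})$ into $N_{\vec{a}}/N$) lets me write $2\ln(2)\, N\, \widehat{\Delta}_{CACI}$ as a sum over the $|\vec{A}|$ strata of chi-squared variables, each with ${df}_{CACI}$ degrees of freedom. Because each stratum statistic is computed on a disjoint subsample, treating them as mutually independent and invoking additivity of independent chi-squared distributions yields a $\chi^2_{|\vec{A}|\, {df}_{CACI}}$ variable. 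The same chi-squared-to-gamma conversion used to pass from equation (\ref{eq13}) to Theorem \ref{Th13} then delivers $\widehat{\Delta}_{CACI}$ as gamma with scale $1/(N \ln 2)$ and shape $|\vec{A}|\, {df}_{CACI}/2$, which is the claimed distribution.

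The main obstacle is justifying that the per-stratum statistics may be treated as independent chi-squared draws despite the fact that the stratum counts $N_{\vec{a}}$ are themselves random. Under the large-$N$ regime implicit in the Taylor-series approximation that underlies Theorem \ref{Th13}, the $N_{\vec{a}}$ concentrate around $N p(\vec{a})$ and, conditional on the stratum assignments, the within-stratum sufficient statistics for $(A_k,C)$ are genuinely independent across strata; the same asymptotic approximation that provides the gamma shape within one stratum also sustains the additivity across strata. I would therefore rely on these standard large-sample heuristics rather than a fully rigorous derivation, matching the level of rigor of the surrounding distributional results.
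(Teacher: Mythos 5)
Your proposal is correct and follows essentially the same route as the paper: both decompose $\Delta_{CACI}$ as $\sum_{\vec{a}} p(\vec{a})\,I(A_k;C\mid\vec{A}=\vec{a})$, apply the single-stratum gamma result with scale $1/(N_{\vec{a}}\ln 2)$, treat the strata as independent, and exploit the cancellation $p(\vec{a})/N_{\vec{a}} = 1/N$ to combine them into one gamma with shape $|\vec{A}|\,{df}_{CACI}/2$. The paper phrases the aggregation step via products of gamma moment generating functions rather than additivity of chi-squared variables, but that is the same argument in different notation, and your closing discussion of the independence heuristic is if anything more explicit than the paper's.
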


\begin{proof}
First note that $\Delta_{CACI}$ can be written as,
\begin{eqnarray}
\Delta_{CACI} &=& \sum_{a_k,\vec{a},c} p(a_k,\vec{a},c) log (\frac{p(a_k,\vec{a},c)p(\vec{a})}{p(c,\vec{a})p(a_k,\vec{a})}) \nonumber \\
              &=& \sum_{a_k,\vec{a},c} p(a_k,\vec{a},c) log (\frac{p(a_k,c|\vec{a})}{p(c|\vec{a})p(a_k|\vec{a})}) \nonumber \\
              &=& \sum_{\vec{a}} p(\vec{a}) \sum_{a_k,c} p(a_k,c|\vec{a}) log (\frac{p(a_k,c|\vec{a})}{p(c|\vec{a})p(a_k|\vec{a})}) \nonumber \\
              &=& \sum_{\vec{a}} p(\vec{a}) \Psi_{\vec{a}}(A_k;C) \mbox{ (Let)}
\end{eqnarray}

Assuming the random variables $A_K$ and $C$ are independent given $\vec{A}$, the expression $\Psi_{\vec{a}}(A_k;C)$ = 0 because $p(a_k,c|\vec{A}=\vec{a})$ = $p(a_k)P(c)$, $p(a_k|\vec{A}=\vec{a})$ = $p(a_k)$ and $p(c|\vec{A}$ = $p(c)$. Therefore we can assume $\Psi_{\vec{a}}(A_k;C)$ to be an independent random variable given each value $\vec{a}$ of the random variable $\vec{A}$. Now note that given a specific value $\vec{a}$ of the random variable $\vec{A}$, we have,
\begin{eqnarray}
\Psi_{\vec{a}}(A_k;C) &=& \sum_{a_k,c} p_{\vec{a}}(a_k,c) log (\frac{p_{\vec{a}}(a_k,c)}{p_{\vec{a}}(c)p(a_k)}) \nonumber \\
                      &=& CACI_{\vec{a}}(A_k;C)
\end{eqnarray}
In the above equation, $p_{\vec{a}}$ represents the probabilities calculated only using the data samples with $\vec{A} = \vec{a}$ and $CACI_{\vec{a}}$ represents the corresponding CACI. Therefore $\Psi_{\vec{a}}(A_k;C)$ is gamma distributed with scale $1/N_{\vec{a}} ln(2)$, shape ${df}_{CACI}/2$ and moment generating function $M_{\Psi_{\vec{a}}}(t) = (1-\frac{t}{N_{\vec{a}} ln(2)})^{-\frac{{df}_{CACI}}{2}}$. $N_{\vec{a}}$ represents the data samples with $\vec{A} = \vec{a}$. As a result, $\Delta_{CACI}$ can be considered as a weighted sum of independent gamma random variates. Therefore the moment generating function of $\Delta_{CACI}$ is given by

\begin{eqnarray}
M_{\Delta_{CACI}}(t) &=& \prod_{\vec{a}} M_{\Psi_{\vec{a}}}(p(\vec{A}=\vec{a})t) \nonumber \\
                     &=& \prod_{\vec{a}} (1-\frac{p(\vec{A}=\vec{a})t}{N_{\vec{a}} ln(2)})^{-\frac{{df}_{CACI}}{2}}
\end{eqnarray}
But $p(\vec{A}=\vec{a})$ = $N_{\vec{a}}/N$ so that $p(\vec{A}=\vec{a})/N_{\vec{a}}$ = $1/N$. Therefore,

\begin{eqnarray}
M_{\Delta_{CACI}}(t) &=& \prod_{\vec{a}} (1-\frac{t}{N} ln(2))^{-\frac{{df}_{CACI}}{2}} \nonumber \\
                     &=& (1-\frac{t}{N} ln(2))^{-\frac{|\vec{A}|{df}_{CACI}}{2}}
\end{eqnarray}
which is the moment generating function of the gamma distribution with scale parameter = $1/(N \hspace{1mm} ln(2))$ and shape parameter = $|\vec{A}|{df}_{CACI}/2$.
\end{proof}

Using theorem \ref{Th13}, the $Pvalue$ of an observed CACI value $t$ is calculated as $Prob(CACI > t)$.

\vspace{-3mm}

\subsection{Significance of Interaction Information}
Determining a closed form expression of the KWII is difficult as KWII involves alternating sums of the entropies of all possible subsets unlike TCI and CACI. We therefore resort to a permutation strategy to calculate the significance (i.e. the $Pvalue$) of KWII of a set of attributes. The strategy is slightly different for unsupervised and supervised analysis. First consider the case of unsupervised analysis. Assume that we want to calculate the significance of $t$ = $KWII(A_1;A_2;\cdots;A_k)$ = $KWII(\omega)$. Let $X$ be the attribute from the set $\{A_1;A_2;\cdots;A_k\}$ with the minimum number of states. For supervised analysis to calculate the significance of $t$ = $KWII(A_1;A_2;\cdots;A_k;C)$ = $KWII(\omega)$ ($C$ being the class attribute), let $X = C$. Our permutation procedure will shuffle the states of the attribute $X$. Then the following algorithm calculates the $Pvalue$ of value $t$:\\

\noindent \textbf{PERMUTATION}($\omega,t$)\\
1.\hspace{2mm}$KWII_{actual} \leftarrow t$;\\
2.\hspace{2mm}Generate $NPERM$ permutations of the data by randomly shuffling the states of the attribute $X$;\\
3.\hspace{1mm}Calculate permuted $KWII^{i}(\omega)$ for each permuted data $i$;\\
4.\hspace{2mm}$Pvalue \leftarrow $ fraction of all $KWII^{i}(\omega) \geq KWII_{actual}$; \\
5.\hspace{2mm}\textbf{return} $Pvalue$;\\
\textbf{end}\\

\noindent \textbf{Fast Permutations} The permutation procedure described in the algorithm, if implemented naively, can be very time consuming because for a given combination the KWII needs to be computed across the entire data samples repeatedly for each shuffle of the states of the attribute $X$. However for discrete, we can implement the permutation in a faster manner. The key observation is that for a given combination attributes (and possibly class $C$), the sufficient statistics for computing the KWII (the empirical counts for each state for different subsets of the attributes) are present in the corresponding contingency table in which the rows represent the states of the attributes (except $X$) while the columns represent the states of $X$. As a result, a shuffle of the states of $X$ corresponds to a change in counts in the cell of the contingency table such that the row sums and column sums are unchanged. Note that, we only need to scan the data once to build the contingency table for each combination which is required anyway for computing the original KWII. Once we have the contingency table for a particular combination, we can shuffle the counts in the contingency table in the above manner and compute the KWII for each shuffled table to compute the permuted KWII values. Assume a combination $C$ with $k$ variables has $b$ states, then the contingency table $T$ will have $b$ cells. Creating $T$ has $O(m \times b)$ complexity where $m$ is the sample size of the data. Then KWII($C$) requires $O(m \times b$ + $2^k \times b)$ = $O(m \times b)$ computations (for $m >> 2^k$) as entropies of all subsets of $k$ variables are computable by marginalizing $T$. Thus the first KWII computation involves $O(m \times b)$ computations because $T$ is constructed. For each permutation, we shuffle the counts in $T$ using an efficient algorithm presented in \cite{Patefield1981} which consumes approximately $O(b)$ computations, so that for $NPERM$ permutations, time complexity if only $NPERM \times O(b)$. Also the KWII constitutes the output from the IIM algorithm and we anticipate very few interactions to be present in the data, so permutations need to be performed on few attribute combinations.

\vspace{-4mm}

%%%%%%%%%%%%%%%%%%%%%%%%%%%%%%%%%  Algorithm %%%%%%%%%%%%%%%%%%%%%%%%%%%%%%%%%%%%%%%%%%%%%%%%%%%%%
\section{Algorithm}
In this section we describe our mining algorithms in details. The algorithms developed will be for unsupervised analysis. The same algorithms with some modifications can be applied for supervised analysis, therefore, the modifications for class associated analysis will be described in context. Our mining algorithm consists of two stages -(1)Correlation Information Miner (or CIM) followed by (2) Interaction Information Miner (or IIM). The CIM explores the combinatorial space of attribute sets using a breadth-first search (BFS) enumerating a BFS tree where each node represents an attribute set $\{A_i;A_j;\cdots;A_k\}$ ($i \le j \le \cdots \le k$)(or, $\{A_i;A_j;\cdots;A_k;C\}$ ($i \le j \le \cdots \le k$) when $C$ is present). Next we describe pruning strategies using the concept of redundancy and bounds on TCI (or CACI) introduced before.

\vspace{-3mm}

\subsection{Redundancy based pruning}
\label{Red}
This pruning strategy is applied to the given data set $D$ before starting the BFS strategy using the redundancy definition \ref{def9}. The goal is to remove redundant attributes thereby reducing the size of the combinatorial space of attribute associations. It consists of (I) For each attribute $A_i \in \zeta$, compute $Red(A_i;A_j)$ with every other attribute $A_j \in \zeta$. If $Red(A_i;A_j)\leq-\Delta$, store $A_j$ in a list associated with $A_i$. This step will create a list of attributes redundant with each $A_i$ denoted as $Cover(A_i)$ (which includes $A_i$). An attribute $A_j \in Cover(A_i)$ is said to be covered by $A_i$. E.g. if $A_1$ is redundant with $A_2$,$A_5$ and $A_8$, Cover($A_1$) = $\{A_1;A_2;A_5;A_8\}$. (II) Create a smaller data set $D'$ by greedily selecting attribute $A_i$ with highest cardinality $|Cover(A_i)|$ (i.e, covering the maximum number of other attributes) until all attributes $\in \zeta$ are covered. This smaller data set will be used as input for the algorithm described below. The computation of $Red(A_i;A_j)$ will use either definitions \ref{def15} or \ref{def16} depending on whether $C$ is absent or present in the analysis.

\vspace{-3mm}

\subsection{Sample Size based pruning}
Given attribute set $S=\{A_1;\cdots;A_k\}$ and sample size $N$, TCI($S$) and KWII($S$) are based on empirically estimated probabilities distributions of the attributes and their combinations from set $S$. Let the cardinality of the set of attribute values of $S$ be $V$. The calculated TCI and KWII are often poor estimates when $N/V<5$ \cite{Agresti}. Therefore, we prune node $S$ when $N/V<5$ to reduce the chances of discovering false positive associations. For example, to evaluate TCI of $\{A_1;A_2;A_3\}$ where attribute takes 3-values, there should be least $3^3 \times 5 = 135$ instances. Similarly, for supervised analysis, with $S=\{A_1;\cdots;A_k;C\}$, we prune node $S$ when $N/V<5$, where $V$ is the cardinality of the values of the attributes in set $S$.

\vspace{-3mm}

\subsection{Bound based pruning}
First we describe the pruning for unsupervised analysis that uses the TCI.

\vspace{-3mm}

\subsubsection{TCI Bound based pruning}
For each node $S$ in the search space, we calculate its upper and lower bounds before actual $TCI(S)$. Let $L(S)$ be the maximum of the lower bounds, $U(S)$ be the minimum of the upper bounds and $T(S)$ be the true TCI for node $S$. Let $P(v)$ be the $Pvalue$ for any value $v$. Note that as $L(S) \leq T(S) \leq U(S)$, we have $P(L(S)) \geq P(T(S)) \geq P(U(S))$. Assume that we have determined if $S$ is a HSC/MSC/NSC. We shall employ the procedures \textbf{Handle HSC} and \textbf{Handle MSC/NSC} described below to handle each case. In the following, in each iteration, $NextLevel$ is a queue that collects nodes to be explored in the next iteration of BFS and $\Theta$ is a set of COI/SCOI output by CIM.

(1) \textbf{Handle HSC} : Assume that the parent of the given node $S$ is not a COI/SCOI. Using property 2 in definition \ref{def7}, if $S$ is a COI, store $S$ in $\Theta$ and add node $S$ to $NextLevel$; otherwise, prune subtree rooted at $S$ as at least one subset of $S$ has redundant correlation information. If the parent of $S$ is a COI/SCOI, using property 2 and 3 in definition \ref{def8}, if $S$ is a SCOI, store $S$ in $\Theta$ and add node $S$ to $NextLevel$; otherwise, prune subtree rooted at $S$ as the new attribute present in $S$ (and not in its parent) does not significantly increase the correlation information. \vspace{1mm}\\
(2) \textbf{Handle MSC/NSC} : If node $S$ is a MSC, $S$ and any superset of it cannot be a COI/SCOI. So simply prune the subtree rooted at $S$. If it is a NSC, add $S$ to $NextLevel$ to continue the search process.

Based on the TCI bounds, we have the following cases:-

\begin{enumerate}
\item $P(U(S))\leq P(L(S))<\alpha_{High}$ : $S$ is a $HSC$. Use \textbf{Handle HSC} to handle it.

\item $P(U(S)) < \alpha_{High} \leq P(L(S))<\alpha_{Low}$ :
\item $P(U(S)) < \alpha_{High}, \alpha_{Low} \leq P(L(S))$ :
\item $\alpha_{High} \leq P(U(S))<\alpha_{Low} \leq P(L(S))$ : Compute the TCI T($S$). If $Pvalue(T(S)) < \alpha_{High}$, $S$ is a HSC, use \textbf{Handle HSC} to handle it. Otherwise use \textbf{Handle MSC/NSC}.

\item $\alpha_{Low} \leq P(U(S))\leq P(L(S))$ : $S$ is a NSC, use \textbf{Handle MSC/NSC}.
\item $\alpha_{High} \leq P(U(S)) \leq P(L(S))<\alpha_{Low}$ : $S$ is a MSC, use \textbf{Handle MSC/NSC}.
\end{enumerate}
Note that actual TCI computations are required only in cases 2,3 and 4 thereby improving computational efficiency. Next we describe the CIM algorithm.

\vspace{-3mm}

\subsubsection{CACI bound based pruning}
\label{CACI_pruning}
The pruning strategy for supervised analysis is very much similar to the above case. Therefore, we do not describe it in details, rather we highlight the following modifications in the above strategy to do the pruning when $C$ is present in the analysis.
\begin{enumerate}
\item Now each node in the search space $S$ represents the set $\{A_1;\cdots;A_k;C\}$ and we calculate the upper and lower bounds before the actual $CACI(S)$.
\item We now have procedures \textbf{Handle HSC\textsubscript{CA}} and \textbf{Handle MSC\textsubscript{CA}/NSC\textsubscript{CA}}. \textbf{Handle HSC\textsubscript{CA}} operate in exactly the same fashion as \textbf{Handle HSC} with the difference that the definitions of \textbf{COI\textsubscript{CA}} and \textbf{SCOI\textsubscript{CA}} are now used. Similarly for \textbf{Handle MSC\textsubscript{CA}/NSC\textsubscript{CA}}.
\item The six bound based cases mentioned above are also applicable with \textbf{Handle HSC\textsubscript{CA}} and \textbf{Handle MSC\textsubscript{CA}/NSC\textsubscript{CA}} usage.
\item Another modification that should be made is for case 5. When the upper bound $U(S)$ is non-significant at $\alpha_{Low}$ and $U(S)$ has reached the maximum value $H(C)$, the CACI of $S$ or its children can never be significant at $\alpha_{Low}$. Therefore $S$ and all its children will be NSC\textsubscript{CA}, so we can safely prune the subtree rooted at $S$.
\end{enumerate}

\vspace{-3mm}

\subsection{The Algorithms}
As before, we first describe the algorithm for unsupervised analysis (CIM and IIM algorithms). Then we describe the changes to be made in CIM to get CIM\textsubscript{CA} algorithm for supervised analysis.

\vspace{-3mm}

\subsubsection{The CIM Algorithm}
We describe the algorithm for unsupervised analysis, the modifications to CIM for supervised analysis will be described separately. We assume that CIM uses the data obtained after redundancy removal (section \ref{Red}) for all computations of correlation information and the upper and lower bounds. The inputs are the significance levels $\alpha_H$ for $\alpha_{High}$ and $\alpha_L$ for $\alpha_{Low}$. Lines 2-8 computes the TCI for every pair of attributes and stores it in $NextLevel$ only if the node is a HSC or a NSC. The HSC are collected in $\Theta$ to be output. Lines 9-33 explores the combinatorial search space in a breadth-first fashion wherein each node is evaluated to be a HSC/MSC/NSC and either the subtree rooted at the node is pruned or the search process is continued depending upon the TCI bound based conditions 1-6 outlined above. The sample size based pruning takes place in line 14.

\begin{algorithm}{CIM}[\alpha_H,\alpha_L]
{
 \qinput
  {
   $\alpha_H$,$\alpha_L$
  }
  \qoutput
  {
    $\Theta$(set of COI and SCOI)
  }
}
$NextLevel \qlet \phi$;$\Theta \qlet \phi$;\\
\textbf{for} attribute pair $S=\{A_i;A_j\}$ \textbf{do}\\
\hspace{2mm}\textbf{if} ($P(TCI(S))<\alpha_{H}$)\\
\hspace{4mm} Add $S$ to $NextLevel$,$\Theta$;\\
\hspace{2mm}\textbf{elseif} ($P(TCI(S))\geq \alpha_{L}$)\\
\hspace{4mm} Add $S$ to $NextLevel$;\\
\hspace{2mm}\textbf{endif}\\
\textbf{endfor}\\
\textbf{while} $NextLevel\neq$ empty \textbf{do}\\
\hspace{2.5mm}  $CurrLevel \qlet NextLevel$;\\
\hspace{2.5mm}  $NextLevel \qlet \phi$;\\
\hspace{2.7mm}\textbf{for} each $P \in CurrLevel$ \textbf{do}\\
\hspace{3.5mm}   \textbf{for} each child $S$ of $P$ \textbf{do}\\
\hspace{4.5mm}   \textbf{if} not enough samples, goto line 31;\\
\hspace{4.5mm}   Calculate $U(S),L(S),P(U(S)),P(L(S));$\\
\hspace{4.5mm}   \textbf{if}$(P(L(S))< \alpha_H)$ \textbf{do}\\
\hspace{6.0mm}         Handle HSC to update $NextLevel$,$\Theta$;\\
\hspace{4.5mm} \textbf{elseif}($P(U(S)) < \alpha_{H} \leq P(L(S))<\alpha_{L}$) or \\
\hspace{14.5mm} ($P(U(S)) < \alpha_{H}, \alpha_{L} \leq P(L(S))$) or \\
\hspace{10.5mm} ($\alpha_{H} \leq P(U(S))<\alpha_{L} \leq P(L(S))$)) \\
\hspace{7mm} $T \qlet TCI(S)$;\\
\hspace{7mm} \textbf{if}$(Pvalue(T)<\alpha_{H})$\\
\hspace{10mm}    Handle HSC to update $NextLevel$,$\Theta$;\\
\hspace{7mm} \textbf{else}\\
\hspace{9mm}     Handle MSC/NSC to update $NextLevel$;\\
\hspace{7mm} \textbf{endif}\\
\hspace{4.5mm} \textbf{elseif}(($\alpha_{L} \leq P(U(S))\leq P(L(S))$) or \\
\hspace{14.5mm} ($\alpha_{H} \leq P(U(S)) \leq P(L(S))<\alpha_{L}$)) \\
\hspace{7mm}     Handle MSC/NSC to update $NextLevel$;\\
\hspace{4.5mm} \textbf{endif} \\
\hspace{4.5mm}\textbf{endfor} //for each child \\
\hspace{2.7mm}\textbf{endfor} //for each P \\
\textbf{endwhile}\\
$\qreturn$  $\Theta$;
\end{algorithm}
\vspace{-2mm}

Next we describe the IIM algorithm that is used to compute KWII from the attribute sets output by $CIM$.

\subsubsection{IIM Algorithm}
Let $\nu \subseteq \zeta$ be the set of attributes present in $\Theta$ (combinations output by $CIM$). Let $K$ be maximum order of combinations to be explored.
Assuming the sample size to be $N$ and the cardinality of the set of values of the $K^{th}$ order combination to be $V$, $K$ is chosen such that $N/V \geq 5$. The following algorithm computes the $KWII$ of attribute sets of order $\leq K$.
\vspace{-2mm}
\begin{algorithm}{IIM}[\nu,K]
{
 \qinput
  {
   $\nu$ (set of attributes present in $\Theta$),
   $K$ (order of the largest attribute set $\in \Theta$)
  }
  \qoutput
  {
    $\Lambda$(set of combinations and their KWII)
  }
}
$\Lambda \qlet $ entropies of all subsets $\tau$ of $\nu$ s.t. $|\tau|\leq K$;\\
\textbf{for} $A_i \in \nu$ \textbf{do}\\
\hspace{2mm}\textbf{for} each subset $X$ of $\nu/\{A_i\}$, s.t. $|X| < K$ \textbf{do}\\
\hspace{5mm} $\Lambda(X \cup \{A_i\}) \qlet \Lambda(X \cup \{A_i\}) - \Lambda(X)$\\
\hspace{2mm}\textbf{endfor}\\
\textbf{endfor}\\
$\qreturn$ $\Lambda;$
\end{algorithm}
\vspace{-2mm}
\noindent In $IIM$, the array $\Lambda$ is indexed by attribute combinations. We initialize $\Lambda$ with entropies of all subsets of $\nu$ containing upto $K$ attributes (line 1). For example, with 3 attributes $A_1, A_2, A_3$ and $K=2$, $\Lambda(\{A_1\}) = H(A_1)$, $\Lambda(\{A_2\}) = H(A_2)$, $\Lambda(\{A_3\}) = H(A_3)$, $\Lambda(\{A_1;A_2\}) = H(A_1;A_2)$, $\Lambda(\{A_1;A_3\}) = H(A_1;A_3)$ and $\Lambda(\{A_2;A_3\}) = H(A_2;A_3)$.
In the end, $\Lambda$ shall contain negative of KWII values for each attribute combination.

\vspace{-3mm}

\subsubsection{The CIM\textsubscript{CA} Algorithm}
Very few modifications are required to CIM to use it for class associated analysis (the CIM\textsubscript{CA}) algorithm):- (1)Use $CACI(S)$ instead of $TCI(S)$ computations, (2) substitute \textbf{Handle HSC} and \textbf{Handle MSC/NSC} procedures with \textbf{Handle HSC\textsubscript{CA}} and \textbf{Handle MSC\textsubscript{CA}/NSC\textsubscript{CA}} procedures respectively, (3) use COI\textsubscript{CA} and SCOI\textsubscript{CA}, (4) use $S=\{A_i;A_j;C\}$ in line 2, and (4) in lines 27-30 do not call procedure \textbf{Handle MSC\textsubscript{CA}/NSC\textsubscript{CA}} if $U(S) = H(C)$ and $\alpha_{L} \leq P(U(S))\leq P(L(S))$ (using condition 4 from subsection \ref{CACI_pruning}).

\vspace{-3mm}

\subsubsection{Modifications to IIM}
For supervised analysis, the combinations output by CIM\textsubscript{CA} algorithm will constitute the input for IIM. We do not need any changes to IIM described above.
Let $\nu \subseteq \zeta$ be the set of attributes (including $C$) present in $\Theta$ (combinations output by CIM\textsubscript{CA}). Let $K$ be the maximum count of predictor attributes in combinations to be explored. With these inputs, the IIM algorithm can be used unchanged for supervised analysis. Once the set of combinations and their KWII is output by IIM, remove those combinations that do not contain $C$. The remaining combinations shall contain negative of KWII values for combinations containing predictor attributes and class $C$.

\vspace{-3mm}

\section{Experimental Results}
In this section we present the experimental result to highlight the performance of our algorithms. In all our experiments, unless otherwise stated, we have set parameters $\alpha_{High}$, $\alpha_{Low}$ and $\Delta$ to $10^{-8}$, $10^{-2}$ and 0.75 respectively. One can set the $\alpha$'s depending on the experiment and data size, e.g. one would set them conservatively to adjust for multiple comparisons. The $\Delta$ can be set to a value $> 0.7$ depending on how much redundancy one wants to remove from the data. Also, for all experiment, the 10,000 permutations are used for evaluating the significance of each KWII at a significance level 0f 0.0001. We use NIFS and mRMR for comparison purposes. NIFS \cite{NIFS} was run with parameter values $\alpha =$ 0.2 and $\beta =$ 0.8 as used in the paper.

\vspace{-3mm}

\subsection{Unsupervised Analysis}

\vspace{-2mm}

\subsubsection{Experiment 1} Here, we evaluate the effectiveness of our mining methods in detecting attribute associations using a synthetic data set in the absence of a class variable. The data consists of 15 binary attributes and 200 samples and three associations are planted in the data. The associations embedded in the data are (1)$A_1=A_2\oplus A_3$,(2)$A_6=A_7\oplus A_8\oplus A_9$, and (3)$A_{11}=A_{12}\oplus A_{13}\oplus A_{14}$ where $\oplus$ denotes exclusive-or operation. In addition, noise is added by flipping each of $A_1, A_6$ and $A_{11}$ with error probability $p$. We repeat the experiment 100 times. Figure \ref{Fig2}A and B show the TCI and significant KWII mined by CIM and IIM, respectively for $p=0.1$. The significance of each KWII was determined using a pvalue of 0.001. The results are presented graphically as a \emph{spectra} of TCI/KWII values plotted against attribute combinations. Utilizing statistical significance based mining, CIM successfully identifies the embedded associations exactly. The KWII spectra only contains the three strongest associations (pvalue $< 0.001$). Figure \ref{Fig2}C shows that $\%$ of combinations with significant correlation information detected by CIM/IIM and is compared with the two methods NIFS\cite{NIFS} and mRMR\cite{Peng2005}. The error probability $p$ is varied as 0,0.1,0.15,0.2,0.25 and 0.3. Using hard thresholds, NIFS fails to detect the attribute associations when the strength of an associations varies due to noise in the data. CIM/IIM solves this problem by mining with statistical significance levels instead of threshold values. The other method mRMR finds subset of attributes with minimal redundancy among the attributes and a class label attribute. As mRMR requires a class attribute, we have run mRMR separately with: (1) $A_1$, (2) $A_6$, and (3) $A_{11}$ as the class attribute. However, mRMR performs poorly (even at p=0.0) because mRMR uses only mutual information between each attribute and the class to identify the associations.

\begin{figure}[h]
\centering \epsfig{file=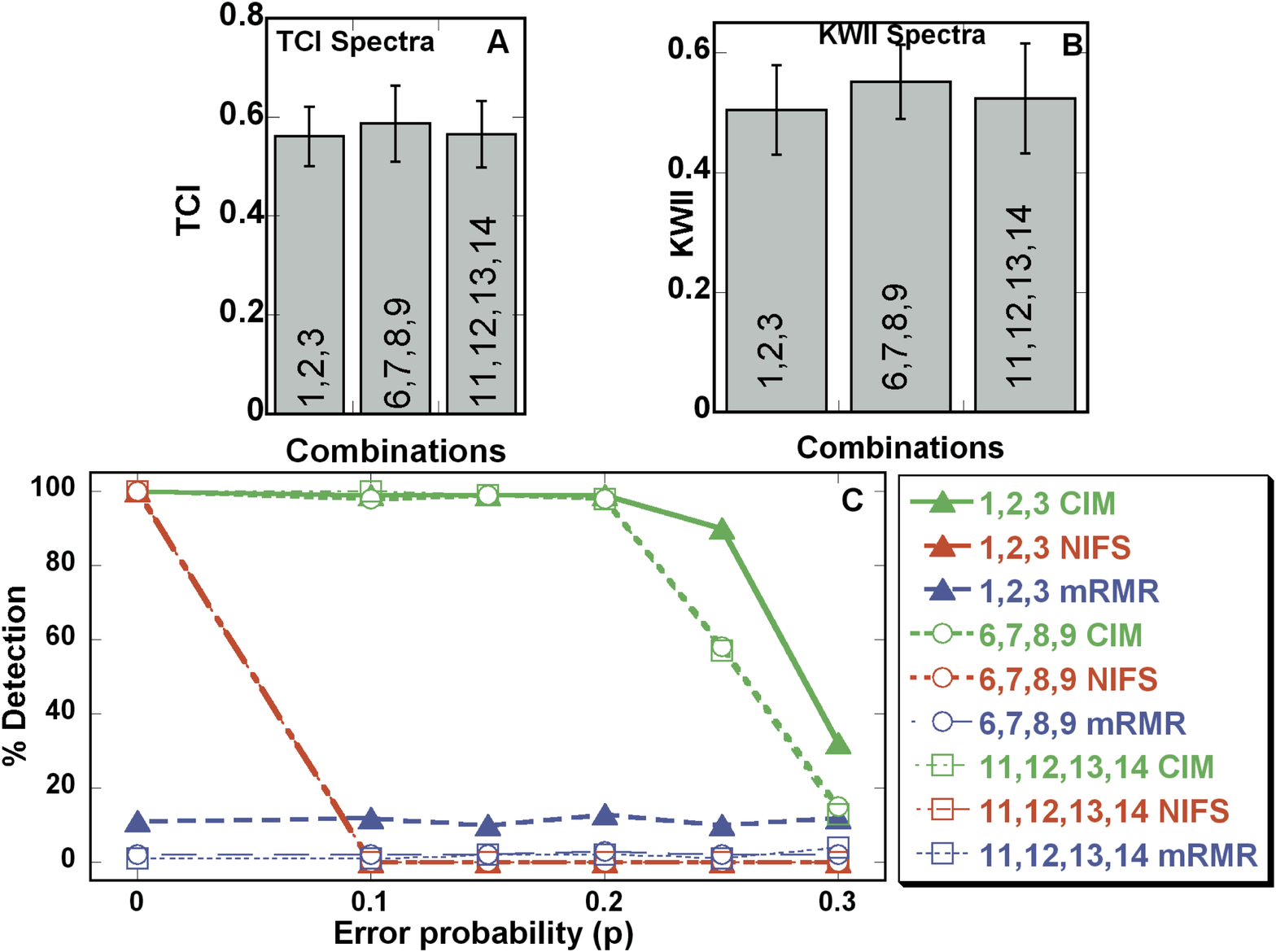,height=2.8in, width=3.4in}
\caption{\label{Fig2}(A) TCI spectra (B) KWII spectra (C) Comparison of CIM/IIM with NIFS and mRMR.}
\end{figure}

\subsubsection{Experiment 2} This experiment is derived from a genetic interaction experiment and mimics the case of pure epistasis \cite{Culvershouse2007} between two SNPs affecting a disease trait. A SNP is a DNA sequence variation in a base pair position at which different sequence alternatives (alleles) exist among individuals in some population. The set of SNPs on a single chromosome of a pair of homologous chromosomes is referred to as a haplotype, and two haplotypes taken together constitutes a genotype. Each SNP usually has two alleles (e.g. $A$, $a$) resulting in three genotype values ($AA$, $Aa$, $aa$). In a case-control experiment, the disease trait is usually binary (0=healthy, 1=diseased). The simulated data in this experiment consists of 16 discrete attributes: $A_1 - A_{15}$ represent SNPs each having 3 genotypic states and $A_{16}$ representing the disease trait is binary. The data consists of 100 samples of $A_{16}$=0 and $A_{16}$=1 each.

\begin{figure}[h]
\centering \epsfig{file=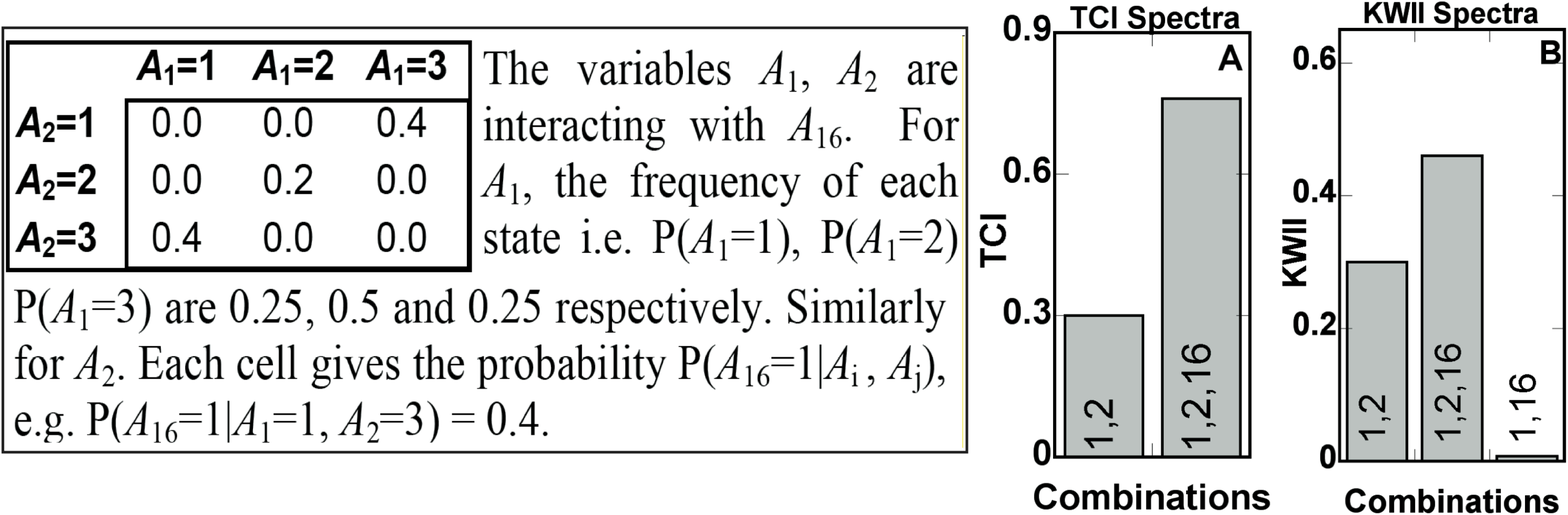,height=1.2in, width=3.5in}
\caption{\label{Fig3}\hspace{-1mm}Association model \& spectra in Experiment 2 for Unsupervised Analysis.}
\end{figure}

Associations are created in the data between $A_1$, $A_2$ and $A_{16}$ using the model in Figure \ref{Fig3}. This results in the following significant combinations:- ($C_1$) $\{A_1;A_2\}$ and ($C_2$) $\{A_1;A_2;A_{16}\}$. Note that $C_1$ is a COI and $C_2$ is a SCOI. Both CIM and IIM are able to identify both the associations (spectra shown in Figure \ref{Fig3}A and B). NIFS identifies only $C_1$ because it assumes that any superset of a set with strong correlation information contains redundant information. However, in this case, $A_{16}$ \emph{can only be identified in combination with $C_1$}, so that $C_2$ contains information about $A_{16}$ not present in $C_1$. Finally, we run mRMR with $A_{16}$ as the class attribute. Combinations $\{A_1;A_{16}\}$ and $\{A_2;A_{16}\}$ have extremely weak mutual information of 0.008 and 0.003 respectively. As mRMR depends on mutual information between each attribute and the class, it fails to identify any combination involving $A_1$ and $A_2$.

\vspace{-3mm}

\subsection{Supervised Analysis}
In this section, we describe the experimental results using our algorithms for supervised analysis (i.e. class attribute is present).

\subsubsection{Experiment 1} In this experiment, the simulated data consists of 15 discrete attributes: $A_1 - A_{15}$ each having 3 states and class attribute $C$. Each of the attributes can be thought to represent genotypes and $C$ represents a binary disease trait. The data consists of 300 samples of $C$=0 and $C$=1 each.
Associations are created in the data between $A_1$, $A_2$ and $C$ using the model in Figure \ref{Fig6}. This results in the following significant combinations:- ($C_1$) $\{A_1;C\}$ and ($C_2$) $\{A_1;A_2;C\}$. This experiment is simulated such that $C_1$ is a COI\textsubscript{CA} and $C_2$ is a SCOI\textsubscript{CA}. Both CIM\textsubscript{CA} and IIM\textsubscript{CA} are able to identify both the associations (Figure \ref{Fig6}) with $100\%$ detection ability and $<5\%$ false combinations in 100 repetitions of the experiment. However, NIFS fail to identify any interaction as it uses hard thresholds. Note that in this case, $A_{2}$ \emph{can only be identified in combination with $C_1$}, so that $C_2$ contains information about $A_{2}$ not present in $C_1$. When we run mRMR with $C$ as the class attribute, it detects
combinations $\{A_1;C\}$ with $100\%$ detection ability. However, because the mutual information $\{A_2;C\}$ is very weak ($\approx 0.003$) mRMR detects $\{A_2;C\}$ with $10\%$ detection ability.

\begin{figure}[h]
\vspace{-3mm}
\centering \epsfig{file=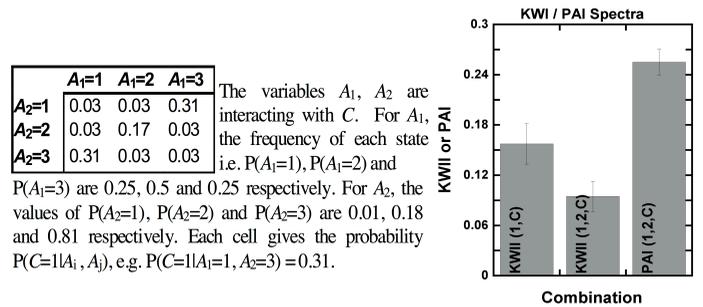,height=1.6in, width=3.6in}
\caption{\label{Fig6}\hspace{-1mm}Association model \& spectra in Experiment 1 for Supervised Analysis.}
\end{figure}

\subsubsection{Experiment 2} The purpose of this experiment is to demonstrate the effectiveness of the redundancy based pruning strategy. In this experiment, the data consists of 15 discrete attributes $A_1$ - $A_{15}$ each with 3 states and a class attribute $C$. A set of complex attribute associations is created involving $A_1$, $A_2$, $A_3$ and $C$. Each of $A_1,A_2,A_3$ represent SNPs having genotypic states $AA/Aa/aa$, $BB/Bb/bb$ and $CC/Cc/cc$ respectively. $C$ stands for a binary disease trait. In addition, redundancy is added by replicating $A_1$ to $A_6$, $A_2$ to $A_7$ and $A_3$ to $A_8$ with $5\%$ error. The data consists of 800 samples of $C$=0 and $C$=1 each. The rule that causes $C$ to be 1 and the CACI and KWII spectra obtained by CIM\textsubscript{CA} and IIM\textsubscript{CA} are shown in Figure \ref{Fig7}. Note that we have effectively removed the redundant attributes ($A_6, A_7, A_8$) and identified all the interacting attributes. Also observe that the KWII spectra complements the CACI spectra by discovering associations like $\{A_1;A_3;C\}$ and $\{A_1;A_2;A_3;C\}$ that are not present in the CACI spectra. Confounded by redundancy, NIFS generates 150 combinations containing attributes $A_1 - A_{10}$ and $C$, but does not contain any combination from the CACI spectra identified by CIM\textsubscript{CA} as their magnitudes are less than 0.8. mRMR is run with $C$ as the class attribute and it identifies attributes $A_1, A_3, C$ in associations $\{A_1;C\}$ and $\{A_3;C\}$ but not $A_2$ because the mutual information $\{A_2;C\}$ is only 0.0008. These show that CIM\textsubscript{CA} and IIM\textsubscript{CA} effectively remove redundancy and are capable of identifying a diverse range of class associated attribute associations.

\begin{figure}[h]
\centering \epsfig{file=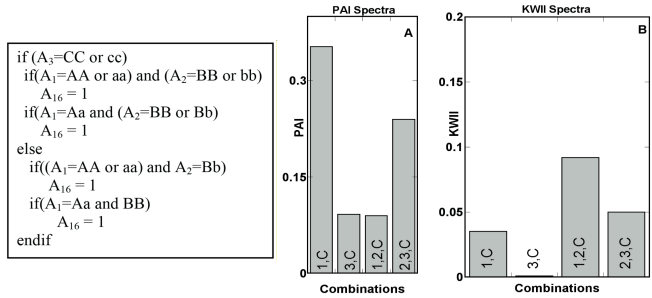,height=1.7in, width=3.6in}
\caption{\label{Fig7}\hspace{-1mm}Association model \& spectra in Experiment 2 for Supervised Analysis.}
\end{figure}

\subsection{Runtime Evaluation}
We have used the following two data sets to evaluate the efficiency of our pruning methods:-(1) Crohn's disease dataset \cite{Daly} is derived from the 616 kilobase
region of human Chromosome 5q31 that may contain a genetic variant responsible for Crohn's disease by genotyping 103 SNPs and contains 144 case and 243 control individuals. (2) Tick-borne encephalitis dataset \cite{Brinza} consists of 58 SNPs genotyped from DNA of 26 patients with severe tick-borne encephalitis virus-induced disease and 65 patients with mild disease. Figure \ref{Fig8} shows the runtime of our mining method (CIM followed by IIM) under the redundancy based and TCI based pruning strategies as well as when both are applied together and none is applied. For both data sets, the missing values were imputed with the most frequent value for that particular SNP. Sample size based pruning is assumed to be active in all the cases. The number of attributes is varied as follows: for each data set, from the set of $N$ attributes, a set of $K$ attributes ($K$ = 10, 20, 30, 40) is randomly selected and removed from the original data. The experiment is repeated 10 times for each data set and the average runtime for each set of $N-K$ attributes is shown. We observe that, the runtime is least when both pruning strategies are active (green, circles). TCI based pruning (blue, squares) achieves better efficiency than redundancy based (red, rhombuses) pruning in both data sets and the runtime increases exponentially when no pruning is applied. These demonstrates the effectiveness of our pruning methods, as the potential search space is exponential in the number of attributes. The results from similar experiment with supervised analysis is shown in Figure \ref{Fig9}.

\begin{figure}[h]
\centering \epsfig{file=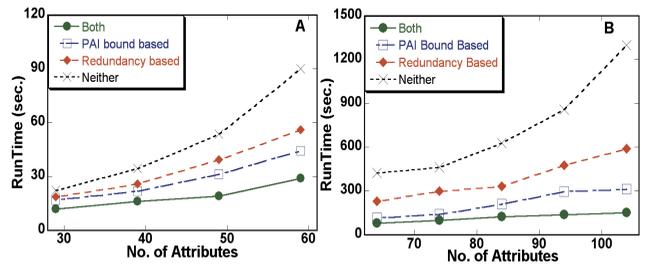,height=1.5in, width=3.5in}
\caption{\label{Fig8}Runtime Evaluation for Unsupervised Analysis with (A) Tick (B) Crohn's Disease.}
\end{figure}

\begin{figure}[h]
\centering \epsfig{file=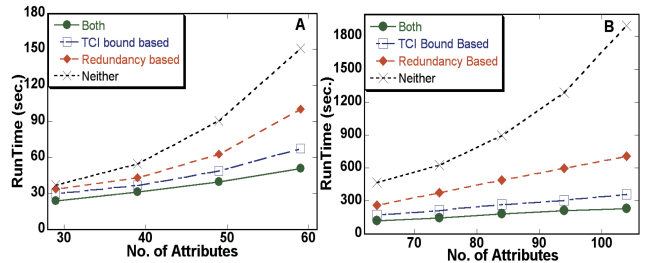,height=1.5in, width=3.5in}
\caption{\label{Fig9}Runtime Evaluation for Supervised Analysis with (A) Tick (B) Crohn's Disease.}
\end{figure}

\subsection{Analysis of Crohn's Disease}
We assess the potential of our mining methods (CIM followed by IIM and CIM\textsubscript{CA} followed by IIM\textsubscript{CA}) for identifying key SNPs involved in the causation of Crohn's disease using data set from Daly et al  \cite{Daly}. The Crohn's disease dataset \cite{Daly} is derived from the 616 kilobase region of human Chromosome 5q31 that may contain a genetic variant responsible for Crohn's disease by genotyping 103 SNPs and contains 144 case and 243 control individuals. The 103 SNPs in the data are numbered 0 to 102. Rioux et al. \cite{Rioux} found 11 SNPs ($IGR2055a\_1$, $IGR2060a\_1$, $IGR2063b\_1$, $IGR2078a\_1$, $IGR2096a\_1$, $IGR2198a\_1$, $IGR2230a\_1$, $IGR2277a\_1$, $IGR2081a\_1$, $IGR3096a\_1$ and $IGR3236a\_1$) with alleles that were associated with risk of Crohn disease. Nine of 11 significant SNPs are present in the data set we analyzed; SNPs $IGR2078a\_1$ and $IGR2277a\_1$ are missing. For association information mining, subjects and SNPs with missing genotypes are eliminated resulting in 40 SNPs with 58 cases and 92 controls. We perform the following two analyses with the data - (1) Mine the association information in the data without the disease phenotype i.e. unsupervised analysis. (2) Mine the association information using our supervised approach with the disease phenotype as the class attribute. In our first analysis, we identify three SNPs $IGR2055a\_1$, $IGR2230a\_1$ and $IGR3236a\_1$ among the combinations with significant $KWII$. In the second analysis where we take the case/control status into account, the five SNPs $IGR2198a\_1$, $IGR2055a\_1$, $IGR3236a\_1$, $IGR2081a\_1$ and $IGR2230a\_1$ are found among the $\{SNP$,$Phenotype\}$ and $\{SNP$,$SNP$,$Phenotype\}$ combinations with significant $KWII$. On closer examination of the data, we found that due to high linkage disequilibrium in the genomic region examined, SNPs $IGR2066a\_1$, $IGR2063b\_1$ and $IGR2096a\_1$ belonged to $Cover(IGR2055a\_1)$ while $IGR3096a\_1$ belonged to $Cover(IGR2230a\_1)$ and were pruned during the redundancy based pruning phase of our mining method. However, each of these SNPs is covered by a representative SNP included in the data, as a result, these SNPs and their associated interactions can be easily recovered using the $Cover$ data structure after $IIM$ completes. For example, consider SNPs $IGR2055a\_1$ and $IGR2066a\_1$. If $IGR2055a\_1$ forms a combination $\{IGR2055a\_1; IGR2198a\_1; C\}$ with significant $KWII$, as we have SNP $IGR2066a\_1$  $\in Cover($SNP $IGR2055a\_1)$, for SNP $IGR2066a\_1$, we can get the combinations with high interaction information as $\{IGR2066a\_1; C\}$ and $\{IGR2066a\_1; IGR2198\_1; C\}$ and test their significance.

\section{Discussion}
In this paper, we have analyzed the problem of mining significant association information between attributes in a data set for both supervised and unsupervised data analysis and have presented novel methods to mine the two types of association information - correlation information and interaction information. Specifically, we have derived the distributional properties of correlation information and bounds on correlation information for the supervised case. We have also developed a novel method for fast permutations to evaluate the significance of interaction information. Using several complex experimental and a real data set, we have critically evaluated the effectiveness and efficiency of our mining strategy. For future work, we would like to explore strategies in making our method scalable for handling large number of attributes as commonly observed in genetic data sets.
\vspace{-2mm}
\bibliographystyle{IEEEtran}
% Generated by IEEEtran.bst, version: 1.13 (2008/09/30)

%\bibliography{siam,sigproc,Proposal_references}

\begin{thebibliography}{10}
\providecommand{\url}[1]{#1}
\csname url@samestyle\endcsname
\providecommand{\newblock}{\relax}
\providecommand{\bibinfo}[2]{#2}
\providecommand{\BIBentrySTDinterwordspacing}{\spaceskip=0pt\relax}
\providecommand{\BIBentryALTinterwordstretchfactor}{4}
\providecommand{\BIBentryALTinterwordspacing}{\spaceskip=\fontdimen2\font plus
\BIBentryALTinterwordstretchfactor\fontdimen3\font minus
  \fontdimen4\font\relax}
\providecommand{\BIBforeignlanguage}[2]{{%
\expandafter\ifx\csname l@#1\endcsname\relax
\typeout{** WARNING: IEEEtran.bst: No hyphenation pattern has been}%
\typeout{** loaded for the language `#1'. Using the pattern for}%
\typeout{** the default language instead.}%
\else
\language=\csname l@#1\endcsname
\fi
#2}}
\providecommand{\BIBdecl}{\relax}
\BIBdecl

\bibitem{Chanda2007}
P.~Chanda, A.~Zhang, D.~Brazeau, L.~Sucheston, J.~L. Freudenheim, C.~Ambrosone,
  and M.~Ramanathan, ``Information-theoretic metrics for visualizing
  gene-environment interactions,'' \emph{Am J Hum Genet}, vol.~81, no.~5, pp.
  939--63, 2007.

\bibitem{Chanda2008}
P.~Chanda, L.~Sucheston, A.~Zhang, D.~Brazeau, J.~L. Freudenheim, C.~B.
  Ambrosone, and M.~Ramanathan, ``Ambience: A novel approach and efficient
  algorithm for identifying informative genetic and environmental interactions
  associated with complex phenotypes,'' \emph{Genetics}, September 9, 2008.

\bibitem{Lee}
Y.~Lee, W.~Y. Kim, Y.~Cai, and J.~H. Comine., ``Effcient mining of correlated
  patterns.'' \emph{ICDM}, 2003.

\bibitem{Ke}
Y.~Ke, J.~Cheng, and W.~Ng., ``Mining quantitative correlated patterns using an
  information-theoretic approach.'' \emph{KDD}, 2006.

\bibitem{Omiecinski}
E.~Omiecinski., ``Alternative interest measures for mining associations.''
  \emph{IEEE Trans. Knowledge and Data Engineering}, vol.~15, pp. 57--69, 2003.

\bibitem{Fleuret2004}
F.~Fleuret, ``Fast binary feature selection with conditional mutual
  information,'' \emph{Journal of Machine Learning Research}, vol.~5, pp.
  1531--1555, 2004.

\bibitem{Tesmer2004}
M.~Tesmer and P.~A. Estevez, ``Amifs: adaptive feature selection by using
  mutual information,'' in \emph{Proc. of 2004 IEEE International Joint
  Conference on Neural Networks}, 2004.

\bibitem{Peng2005}
H.~Peng, F.~Long, and C.~Ding, ``Feature selection based on mutual
  information:criteria of max-dependency, max-relevance,and min-redundancy,''
  \emph{IEEE Trans. Pattern Analysis and Machine Intelligence}, vol.~27, no.~8,
  pp. 1226--1238, 2005.

\bibitem{Xiong}
H.~Xiong, S.~Shekhar, P.-N. Tan, and V.~Kumar., ``Exploiting a support-based
  upper bound of pearson's correlation coefficient for efficiently identifying
  strongly correlated pairs.'' \emph{KDD}, 2004.

\bibitem{Apriori}
R.~Agarwal and R.~Srikant, ``Fast algorithms for mining association rules in
  large databases,'' \emph{In J. B. Bocca, M. Jarke and C. Zaniolo, editors,
  Proceedings of 20th International Conference on Very Large Data Bases}, pp.
  487--499, 1994.

\bibitem{Brin}
S.~Brin, R.~Motwanit, and C.~Silverstein, ``Beyond market baskets: Generalizing
  association rules to correlations.'' \emph{In Proceedings of ACM SIGMOD
  International Conference on Management of Data}, pp. 265--276, 1997.

\bibitem{Knobbe}
A.~J. Knobbe and E.~K.~Y. Ho, ``Maximally informative k-itemsets and their
  efficient discovery,'' \emph{In Proceedings of KDD}, pp. 237--244, 2006.

\bibitem{Heikinheimo}
H.~Heikinheimo, E.~Hinkkanen, H.~Mannila, T.~Mielikäinen, and J.~K. Seppänen,
  ``Finding low-entropy sets and trees from binary data,'' \emph{In Proceedings
  of KDD}, pp. 350--359, 2007.

\bibitem{NIFS}
W.~W. Xiang~Zhang, Feng~Pan and A.~Nobel, ``Mining non-redundant high order
  correlations in binary data,'' in \emph{VLDB, Auckland, New Zealand}, 2008.

\bibitem{McGill1955}
W.~J. McGill, ``Multivariate information transmission,'' \emph{IEEE Trans. Inf.
  Theory}, vol.~4, no.~4, pp. 93--111, 1955.

\bibitem{Shanon1948}
C.~E. Shannon, ``A mathematical theory of communication,'' \emph{Bell System
  Technical Journal}, vol.~27, pp. 379--423,623--656, 1948.

\bibitem{Han1980}
T.~S. Han, ``Multiple mutual informations and multiple interactions in
  frequency data,'' \emph{Information and Control}, vol.~46, no.~1, pp. 26--45,
  1980.

\bibitem{Tsujishita1995}
T.~Tsujishita, ``On triple mutual information,'' \emph{Advances in Applied
  Mathematics}, vol.~16, pp. 269--274, 1995.

\bibitem{Cerf1997}
N.~J. Cerf and C.~Adami, ``Entropic bell inequalities,'' \emph{Physical Review
  A}, vol.~55, no.~5, pp. 3371--3374, May 1997.

\bibitem{Matsuda2000}
H.~Matsuda, ``Physical nature of higher-order mutual information: Intrinsic
  correlations and frustration,'' \emph{Physical Review E}, vol.~62, no.~3, pp.
  3096--3012, May 2000.

\bibitem{Bell2003}
A.~J. Bell, ``The co-information lattice,'' in \emph{Fourth International
  Symposium on Independent Component Analysis and Blins Signal Separatation,
  Nara, Japan}, 2003.

\bibitem{Jakulin2003}
A.~Jakulin and I.~Bratko, ``Analyzing attribute dependencies,'' in \emph{Proc.
  of Principles of Knowledge Discovery in Data (PKDD), Springer-Verlag}, 2003.

\bibitem{Jakulin2004b}
------, ``Testing the significance of attribute interactions,'' in
  \emph{Proceedings of the Twenty-first International Conference on Machine
  Learning (ICML 2004), Banff, Canada}, 2003.

\bibitem{Nazareth2005}
D.~L. Nazareth and E.~S. Soofi, ``Visualizing attribute interdependencies using
  mutual information, hierarchical clustering, multidimensional scaling and
  self organizing maps,'' in \emph{Proc. of 40th Hawaii International
  Conference on System Sciences}, 2007.

\bibitem{Krasov2005}
A.~Kraskov, H.~St¨ogbauer, R.~G. Andrzejak, and P.~Grassberger, ``Hierarchical
  clustering using mutual information,'' \emph{Europhys. Letters}, vol.~70,
  no.~2, pp. 278--284, 2005.

\bibitem{Zhou2004}
X.~Zhou, X.~Wang, E.~R. Dougherty, D.~Russ, and E.~Suh, ``Gene clustering based
  on clusterwide mutual information,'' \emph{Journal of Computational Biology},
  vol.~11, pp. 147--161, 2004.

\bibitem{Hall1998}
M.~A. Hall, ``Correlation-based feature subset selection for machine
  learning,'' in \emph{Hamilton, New Zealand}, 1998.

\bibitem{Pudil2003}
P.~Pudil, J.~Novovicova, , and J.~Kittler, ``Floating search methods in feature
  selection,'' \emph{Pattern Recognition Letters}, vol.~15, no.~11, pp.
  1119--1125, 1994.

\bibitem{Alves2005}
A.~Alves, N.~Zagoruiko, and O.~Okun, ``Predictive analysis of gene expression
  data from human sage libraries,'' in \emph{ECML/PKDD Discovery Challenge,
  Porto, Portugal}, 2005.

\bibitem{Kwak2002}
N.~Kwak and C.~Choi, ``Input feature selection by mutual information based on
  parzen window,'' \emph{IEEE Trans. Pattern Analysis and Machine
  Intelligence}, vol.~24, no.~12, pp. 1667--1671, 2002.

\bibitem{Vipin2002}
P.-N. Tan, V.~Kumar, and J.~Srivastava, ``Hyperclique pattern discovery,''
  \emph{Selecting the right interestingness measure for association patterns,
  KDD}, pp. 32--41, 2002.

\bibitem{Vipin2006}
H.~Xiong, P.-N. T, and V.~Kumar, ``Hyperclique pattern discovery,'' \emph{Data
  Mining Knowledge Discovery Journal}, vol.~13, pp. 219--242, 2006.

\bibitem{Chanda2010}
P.~Chanda, A.~Zhang, and M.~Ramanathan, ``On mining statistically significant
  attribute association information,'' in \emph{Accepted in SIAM Data mining
  conference (SDM'10), Columbus, Ohio}, 2010.

\bibitem{Jakulin2004a}
A.~Jakulin and I.~Bratko, ``Quantifying and visualizing attribute
  interactions:an approach based on entropy,''
  \emph{http://arxiv.org/abs/cs.AI/0308002v3}, 2004.

\bibitem{Sheskin}
D.~J. Sheskin, \emph{Handbook of Parametric and Non-parametric Statisitcal
  Procedures, Second Edition}.\hskip 1em plus 0.5em minus 0.4em\relax Chapman
  and Hall/CRC.

\bibitem{Patefield1981}
W.~M. Patefield, ``Algorithm as 159: An efficient method of generating random r
  x c tables with given row and column totals,'' \emph{Journal of the Royal
  Statistical Society, Series C (Applied Statistics)}, vol.~30, no.~1, pp.
  91--97, 1981.

\bibitem{Agresti}
A.~Agresti, \emph{Categorical data analysis, 2nd Edition}.\hskip 1em plus 0.5em
  minus 0.4em\relax Wiley Series in Probability and Statistics, 2002.

\bibitem{Culvershouse2007}
R.~Culverhouse., ``The use of the restricted partition method with case-control
  data,'' \emph{Hum Hered}, vol.~63, pp. 93--100, 2007.

\bibitem{Daly}
M.~Daly, J.~Rioux, S.~Schaffner, T.~Hudson, and E.~Lander, ``High resolution
  haplotype structure in the human genome.'' \emph{Nature Genetics}, vol.~29,
  pp. 229--232, 2001.

\bibitem{Brinza}
D.~Brinza, A.~Perelygin, M.~Brinton, and A.~Zelikovsky, ``Search for multi-snp
  disease association.'' \emph{Proc. Fifth Intl. Conf. on Bioinformatics of
  Genome Regulation and Structure (BGRS'06)}, pp. 122--125, 2006.

\bibitem{Rioux}
J.~Rioux, M.~J. Daly, M.~S. Silverberg, K.~Lindblad, H.~Steinhart, Z.~Cohen,
  T.~Delmonte, and et~al., ``Genetic variation in the 5q31 cytokine gene
  cluster confers susceptibility to crohn disease.'' \emph{Nature Genetics},
  vol.~29, pp. 223--228, 2001.

\end{thebibliography}

\end{document}